\newtheorem{theorem}{\bf Theorem}
\newtheorem{lemma}[theorem]{\bf Lemma}
\def \sys {\textit{DynamicSLAM}}
\begin{document}
\title{DynamicSLAM: Leveraging Human Anchors for Ubiquitous Low-Overhead Indoor Localization}

\author{Ahmed~Shokry,~\IEEEmembership{Student Member,~IEEE,}
        Moustafa~Elhamshary,~\IEEEmembership{Member,~IEEE}
        and~Moustafa~Youssef,~\IEEEmembership{Fellow,~IEEE}
        
\IEEEcompsocitemizethanks{
\IEEEcompsocthanksitem  Ahmed Shokry is with the Department of Computer Science, Alexandria University, Egypt.\protect\\
E-mail: ahmed.shokry@alexu.edu.eg

\IEEEcompsocthanksitem  Moustafa Elhamshary and Moustafa Youssef are with Egypt-Japan University for Science and Technology (E-JUST), Alexandria, Egypt.\protect\\
E-mail: {moustafa.elhamshary,moustafa.youssef}@ejust.edu.eg
}
\thanks{Under submission December 26, 2018.}}

\markboth{IEEE TRANSACTIONS ON MOBILE COMPUTING, VOL. X, NO. X, XXXX 2020}%
{Shell \MakeLowercase{\textit{et al.}}: Bare Demo of IEEEtran.cls for Computer Society Journals}
\IEEEtitleabstractindextext{%
\begin{abstract}
 We present \sys{}: an  indoor localization technique that eliminates the need for the daunting calibration step. 
\sys{} is a novel Simultaneous Localization And Mapping  (SLAM) framework that iteratively acquires the  feature map of the environment while simultaneously localizing users relative to this map. Specifically, we employ the phone inertial sensors to keep track of the user's path. To compensate for the error accumulation due to the low-cost inertial sensors, \sys{} leverages unique  points in the environment (anchors) as observations to reduce the estimated location error.  \sys{} introduces the novel concept of mobile human anchors that are based on the encounters with  other users in the environment, significantly increasing the number and ubiquity of anchors and boosting localization accuracy. 
 We present different encounter models and show how they are incorporated in a unified probabilistic framework to reduce the ambiguity in the user location.
Furthermore, we present a theoretical proof for system convergence and the human anchors ability to reset the accumulated error.

Evaluation of \sys{} using different Android phones shows that it can provide a localization accuracy with a median of 1.1m. This accuracy  outperforms the state-of-the-art techniques by 55\%, highlighting \sys{} promise for ubiquitous indoor localization.
\end{abstract}

\begin{IEEEkeywords}
Unconventional localization, SLAM, indoor localization, unsupervised localization.
\end{IEEEkeywords}}

\maketitle

\IEEEdisplaynontitleabstractindextext
\IEEEpeerreviewmaketitle

\graphicspath{{./Figures/}}
\IEEEraisesectionheading{
\section{Introduction}}
Recent years have witnessed the advent of indoor localization systems
harnessing the many untapped capabilities of phones-embedded sensors  and their 
widespread availability. 
Most developed  indoor localization techniques are WiFi-based, leveraging
the Received Signal Strength (RSS) from WiFi access
points (APs) as the metric for the location determination. To combat the noisy wireless channel, typically a WiFi fingerprint  map is  built that captures the signatures of the APs at different locations in the 
area of interest. Nonetheless, the deployment
cost is prohibitive as this requires a  calibration phase; which is time consuming,
labour intensive and is vulnerable to environmental dynamics.
To tackle this problem, a number of approaches for automating
the fingerprinting process have been proposed including the use of RF
propagation models \cite{elbakly2016robust}, combining RF localization with other
sensors \cite{wang2012no}, or crowdsourcing the fingerprint; where users perform
the required survey process in the real-time  \cite{rai2012zee,park2010growing}. These techniques, however, suffer from  low location accuracy and/or
require explicit user intervention. To further address these issues,
some algorithms, e.g., \cite{jin2011robust}, leverage the phone inertial sensors to track  the user starting from a reference point. However, the localization error quickly accumulates due to the noise of the low-cost inertial sensors. To alleviate this problem, researchers, e.g. \cite{rai2012zee}, proposed to match the user path with the map information to remove any outlier locations. 
Nonetheless, these indoor localization techniques require the knowledge of  indoor map (i.e., wall and hallway locations).

Recently, Simultaneous Localization and Mapping (SLAM)  approaches, commonly used in the robotics domain, have been introduced to provide iterative and  autonomous feature maps (i.e.,  anchor positions and signatures; not the indoor floorplan) of previously unknown environments and localization of a subject within this environment simultaneously. For instance,  ActionSLAM \cite{hardegger2012actionslam} leverages foot-mounted inertial sensors to track the user's  indoor location  using detected user actions as observations.  However, foot-mounted sensors are not ubiquitous and the recognized activities  cannot be accurately detected by standard smartphone sensors, limiting this technique scalability. 
 To extend SLAM  to standard smartphones, SemanticSLAM \cite{abdelnasser2016semanticslam} leverages phone inertial sensors to continuously estimate the user's location and  detect supervised and unsupervised anchors in the building structure (e.g., stairs, elevators, etc.) with  unique patterns that surface on the phone sensors when passing through them. These anchors provide error resetting opportunities to control the inertial sensors error accumulation. Although SemanticSLAM achieves good localization accuracy, its performance is contingent on the  anchor density \cite{abdelnasser2016semanticslam}, which may be limited in many scenarios when the user moves in only one floor or not passing by any elevation-change anchor; losing many opportunities for  error-resetting.

In this paper, we propose \sys{}, a novel SLAM framework that can provide accurate, robust, low-overhead,
and ubiquitous indoor localization. \sys{} is a plug-and-play indoor localization system, where no infrastructure support nor war driving  are needed for the operation of the system; making it  building-independent. As in traditional SLAM, \sys{} tracks the user's location using dead reckoning based on the phone inertial sensors and leverages \textit{stationary anchors} to curb the accumulated location error. However, \textit{to further increase the anchor density and hence the localization accuracy in any environment}, \sys{} introduces the novel concept of \textbf{\textit{human mobile anchors}}. Specifically, \sys{}  leverages humans walking in the environment as anchors for error-resetting opportunities.
To achieve this, \sys{}  needs to address a number of challenges, including recognizing users' encounters in a ubiquitous manner, determining  their relative location, and handling the inherent error in the users' estimated locations. 

We evaluate the performance of \sys{} in a real deployment based on different Android phones.  Our results show that it can achieve a median localization error of 1.1m, outperforming  other state-of-the-art techniques by more than 55\%.
In summary, we provide the following contributions in this paper:
\begin{itemize}
    \item \textbf{We present a novel collaborative simultaneous localization and mapping framework that leverages other users in the environment as mobile anchors to enhance the localization and mapping accuracy in environments with low building anchor density.} 
    \item \textbf{We present different ubiquitous observation models to  detect users' encounters and estimate  their relative  distances.}
    \item \textbf{We present a theoretical proof for system convergence and the human anchors ability to reset the accumulated error.}
     \item \textbf{We implement and evaluate our system on different Android phones and quantify its performance relative to the state-of-the-art systems.}
\end{itemize}
The rest of the paper is organized as follows:  Section~\ref{sec:background} gives a brief background about the traditional SLAM framework.
Section \ref{sec:over} gives an overview of the proposed system 
 main components while Section~\ref{sec:social} gives the details of the \sys{} probabilistic framework.
 We evaluate the system
performance in Section~\ref{sec:eval}. 
Finally sections~\ref{sec:related} and \ref{sec:conc} discuss related work and conclude the paper respectively.
\section{SLAM Background}
\label{sec:background} 
Simultaneous Localization and Mapping (SLAM) is a computational problem  originally introduced in  robotics where a mobile robot is  placed
at an unknown location in an unknown environment and 
the robot  continuously localizes itself within this environment,
 while simultaneously incrementally building a consistent \textit{feature} map of this environment   \cite{durrant2006simultaneous}. The points constituting the feature map are also called anchors \cite{abdelnasser2016semanticslam}. In SLAM,  the trajectory of the robot and the locations and signatures of all anchors are estimated on-line  without the need for any a priori knowledge. 
To do that, SLAM harnesses the robot  odometers  to keep track of
the robot motion (i.e., to estimate its pose)  as well as  its exteroceptive sensors for detecting anchors in the environment to build the map.  In a probabilistic form, the goal of SLAM is to find the pose $s^t$ and map $\Theta$ that maximize the  joint  probability density function:
\begin{equation}
\label{eq:goal}
    p({s}_t, \Theta|u^t, z^t,n^t) 
\end{equation}

where  $u^t = \{u_1,..,u_t\}$   is the  set  of motion update measurements (displacement and heading change) capturing the control history till time $t$, $z^t = \{z_1,..,z_t\}$ is the history of anchor location observations till time $t$, and $n^t = \{n_1,...,n_t\}$ is  history of anchor identifications (data associations) where $n_t$
specifies the identity of the anchor observed at time $t$.  

The earliest SLAM algorithm \cite{durrant2006simultaneous}
is based on the extended Kalman filter (EKF). The EKF
represents the robot's map and position by a
high-dimensional Gaussian density over all map anchors
and the robot pose. The off-diagonal elements
in the covariance matrix of this multivariate Gaussian
represent the correlation between errors in the robot
pose and the anchors in the map. Therefore, the
EKF can accommodate the natural correlation of errors
in the map. To  estimate the joint probability
density function in Equation \ref{eq:goal}, the EKF-SLAM algorithm factors it into two different models:   the state transition model (motion update model) $p({s_t}|u_t, {s_{t-1}})$ and the observation model $p(z_t|s_t, \Theta_{n_t}, n_t)$,
 describing the effect of the control input and
observation respectively. 
 The motion update model is assumed to be a Markov
process in which the next position depends only on the
immediately preceding position and the applied control
$u_k$, and is independent of both the observations and the
map. The observation model describes the probability of
making an observation $z_k$ when the robot location and
anchor locations are known. 
EKF-SLAM  approach has several drawbacks. First,  its 
computation complexity is quadratic in the
number of anchors \cite{thrun2005probabilistic}.  Second, EKF assumes a Gaussian noise  for
 the robot motion and observation. In this case, the amount of uncertainty in sensors must be relatively small, since the linearization in EKF may introduce intolerable noise.
  Third, EKF represents posteriors by a Gaussian, which is a unimodal distribution that works well only in tracking problems where the posterior is focused around the true state with a
small margin of uncertainty.  Gaussian posteriors are a poor match for many
global estimation problems in which many distinct hypotheses exist, each of
which forms its own mode in the posterior. This directly impacts 
the data association, i.e., how to determine the
identity of the detected anchors when multiple of
them have a similar signature (e.g. two nearby stairs), which can lead to different maps
based on the chosen association.  

FastSLAM \cite{montemerlo2002fastslam,michael2003fastslam,montemerlo2007fastslam} is an alternative SLAM solution that  mitigates the limitation of EKF-SLAM.
 FastSLAM combines particle filter and extended
Kalman filter. 
The basic idea of FastSLAM is to maintain a set of particles. Each particle
contains a sampled robot path and its own representation of the  map  where each anchor
in the map is represented by its own local Gaussian instead of
one big joint Gaussian as  in the case of EKF-SLAM algorithm. This is  based on the SLAM property where the map anchors are conditionally independent
given the path. By factoring out the path (one per particle), we can 
simply treat each map anchor  independently, thereby avoiding the costly
step of maintaining the correlations between them that plagued the EKF-SLAM
approach.  The resulting
representation requires space linear in the size of the map, and linear in
the number of particles. Estimating the posterior over all
paths enables the factorization of posterior over poses and maps:
\begin{equation}
p(s^t,\Theta |z^t, u^t, n^t) = p( s^t|z^t, u^t, n^t) \prod_{n=1}^{N_L}\  p(\Theta_n|s^t, z^t, n^t)
\end{equation}
This factorization states that the calculation of the posterior over paths and
maps can be decomposed into $N_L+1$ probabilities, where $N_L$  is the number of anchors. FastSLAM estimates the robot path $s^t$ using a particle filter (Localization problem).  The second term (Mapping problem) can be factored  into $N_L$ separate problems, 
 where the individual anchor location probability density function $p(\theta_n|s^t, z^t, n^t)$ is estimated using an EKF. 
Formally, the posterior of the $m^{th}$ particle ($S_t^{[m]}$)
contains a path $s^{t,[m]}$ and $N_L$  local estimates of  anchors
$\Theta^{[m]}$. A local  estimate of anchor $n$ by particle $[m]$ is represented by anchor type$\hat {f}_{n,t}$, the mean of anchor position $\mu_{n,t}^{[m]}$, and the position covariance $\sum_{n,t}^m$.

FastSLAM has many  advantages  over  EKF-SLAM. First, FastSLAM can be implemented in time logarithmic in the number
of anchors \cite{thrun2005probabilistic}, offering computational advantages over the plain
EKF-SLAM. Second, FastSLAM provides  significant 
robustness in data association  as   it maintains posteriors
over multiple data associations, 
and not just by tracking only a single
data association at any point in time based on the incremental
maximum likelihood data association incorporated in EKF-SLAM \cite{montemerlo2002fastslam}. Finally, FastSLAM can also cope with
non-linear models as particle filters can handle non-linear robot motion models and is guaranteed to converge under
certain assumptions \cite{montemerlo2002fastslam}. 
Due to these
advantages, we adopt FastSLAM approach in \sys{}.
\section{System Overview}
\label{sec:over}
Now, we start with an overview of \sys{}  core building blocks (Figure \ref{overview}), postponing the discussion on the core SLAM algorithm to the next section.

\subsection{ User Mobile Traces}
\sys{} collects time-stamped sensor measurements including  accelerometer, gyroscope, and magnetometer, and the received
 signal strength values from the available WiFi access points and Bluetooth. The collected  raw sensor measurements are preprocessed to reduce the effect of noise by applying a low-pass filter, and phone orientation changes by transforming the sensor readings from the mobile coordinate system to the world coordinate system leveraging the inertial sensors \cite{mohssen2014s}. 
 \setlength{\belowcaptionskip}{-9pt}
\begin{figure}[!t]
\centering
\includegraphics[width=0.51\textwidth]{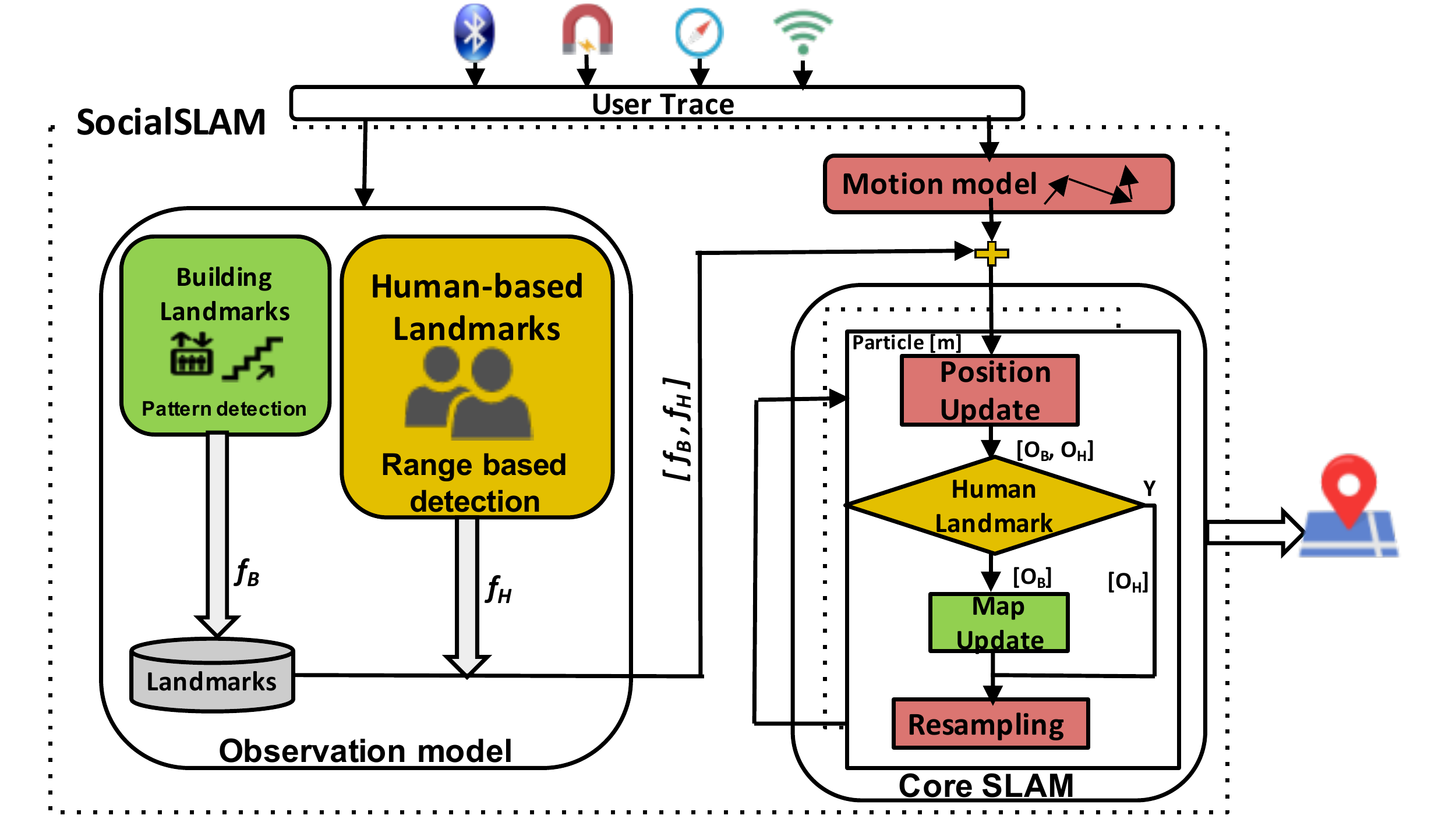}
\caption{\sys{} architecture.}
\label{overview}
\end{figure}
\subsection{Dead reckoning- Motion Model}
\sys{} relies on dead-reckoning to track the user's location starting from a reference point (e.g., the last GPS fix at the building entrance \cite{alzantot2012crowdinside}).  Dead-reckoning  leverages inertial sensors (accelerometer, magnetometer, and gyroscope) where  the current user's location is estimated from  
 the previous location, distance traveled by the user, and the direction of  her motion  during the traveling distance. The user direction of motion  can be estimated from the magnetometer and/or the gyroscope ~\cite{wang2012no},
 while the displacement can be obtained from the accelerometer~\cite{alzantot2012uptime}.
\subsection{Anchors Detection- Observation Model}
The dead-reckoning accumulated location error is unbounded \cite{wang2012no}. Hence, it is infeasible for indoor tracking. To mitigate this error, we propose a anchor-based displacement error resetting technique. In particular, we leverage ample and unique  points in the indoor environment to reset the accumulated error when the user hits one of them. These anchor points (anchors) can be estimated in a crowd-sourced manner, removing the need for any labor-intensive calibration. Traditional anchors are either physical structure/machines or organic anchors that exist in  \textbf{\textit{buildings}}
 (e.g., escalators, elevators, stairs, turns, electronic machines, etc.) \cite{abdelnasser2016semanticslam,wang2012no}. \sys{} \textit{introduces a novel class of} \textbf{\textit{human-based}} anchors, where  encountering other system users is used as dynamic mobile anchors, increasing the anchor density and making them more ubiquitously available. 
 The details of extracting the two anchor classes and handling the associated challenges are discussed in the next section.
\subsection{SLAM Algorithm}
\sys{} extends the FastSLAM algorithm to incorporate its motion and observation models.  Specifically, when  \sys{} detects a user step, it triggers a position update where each particle  updates  its location belief  as detailed in Section \ref{sec:social}.  
When the system detects a human anchor (i.e.,  encountering  another user), the particle location is refined by applying an extended Kalman filter on the relative users' locations and the covariance matrix representing the uncertainty in relative location estimates. The EKF helps to filter out  the noise in users relative location estimation. \textbf{Two approaches} that trade-off the accuracy and efficiency are proposed in \sys{} to fuse the two users' locations. 
On the other hand, if the particle  detects a traditional building anchor, both the  particle location and the map are updated to account for the detection of these anchors.
Finally, in both cases, the particles importance weights are estimated to reflect the confidence of the location associated with each particle. The
weights of  particles are initialized equally and updated at each step based on the likelihood of the
detected anchors (\textit{user} and \textit{building} anchors). Finally, a resampling step is performed using the current
weights in order to refine the particles and drop those that significantly deviated from the actual user location. In the next section, we will explain the details of the SLAM algorithm incorporated in \sys{}.
\section{The DynamicSLAM System}
\label{sec:social}
This section elaborates on the main components of \sys{}: the anchor detection and the core SLAM algorithm as well as provides a proof of the system convergence. Finally, we discuss the possible ways to deploy the system. 
\subsection{Anchor Detection}
In this section, we first briefly discuss the detection of the traditional  building anchors. We then elaborate on the details of the detection algorithm for the \textbf{\textit{novel human-based anchors}}. 
 \subsubsection{Building anchors}
Buildings have many unique points which present identifiable signatures on one or more cell-phone sensors. These unique patterns can be leveraged to automatically recognize these anchors. Once a user sees a anchor, \sys{} detects its signature and leverages it to reset the dead-reckoning error.
The list of  physical  anchors starts by the building's door entrance, detected by the loss of the GPS signal, which is used as
a reference point to seed the dead-reckoning process \cite{alzantot2012crowdinside}. Other  indoor anchors include escalators, elevators, stairs, turns, and doors which can be detected using a finite state machine by employing the approach in \cite{wang2012no}. 
Also, there may exist some anchors that are  specific to some buildings. For instance, railway stations and airports have
many specific machines (e.g., ticketing  machines, lockers, electronic gates, etc) that can be detected  using the different phone sensors \cite{elhamshary2016transitlabel}. 

Finally, there are many organic anchors that may have a unique signature on the phone sensors without  having a physical meaning to humans, e.g., areas with magnetic distortion caused by  nearby electromagnetic waves. These points (organic anchors) can be detected using unsupervised learning techniques and can also be leveraged as anchors \cite{abdelnasser2016semanticslam}.

Note that since both anchor types (physical and organic) have a unique signature in the building, there is no need to differentiate between the different types of building anchors during processing. This differentiation is only relevant during the anchor extraction process.

\sys{} learns anchors locations through crowd-sourcing, which converges to the true anchor location over time \cite{wang2012no,abdelnasser2016semanticslam}. In addition, in some buildings, a map with the location of the building anchors, such as elevators and stairs, may be available. This can further speed the convergence process. Nonetheless, \sys{} models the uncertainty of the anchor locations in its operation. 
\subsubsection{Human-based anchors}
\label{sec:humanLM}
The density of anchors differs from one building to another.  If the  anchors are not dense in a given building, there are fewer opportunities for curbing the dead-reckoning accumulated error, affecting the overall localization accuracy. To alleviate this problem, \sys{} introduces the concept of \textit{\textbf{human-based mobile anchors}}, where other nearby users in the environment are used to reset the localization error. 
 The main intuition is that,  when two users encounter each other, they can calculate their relative distances, and consequently, both users can refine their internal beliefs;  reducing each other  uncertainty in their estimated locations.  \sys{} presents \textit{\textbf{two different measurement models}} to detect the users' encounter and estimate their  relative distances based on the available signals and leading to different ubiquity and accuracy. These are the Bluetooth and the WiFi location models. 
 \begin{enumerate}  
  \itemsep0em
 \item \textbf{The Bluetooth Model:} A user overhearing Bluetooth signals emitted from nearby users can directly infer their encounter. The
 Received Signal Strength  (RSS) can be used to estimate the distance between the nearby users. 
We have conducted experiments to capture the
relation between the Bluetooth RSS and the relative
distance between users. We also show the quality of fitting using different fitting functions, including those based on the traditional known propagation models \cite{bahl2000radar,feldmann2003indoor,jung2013distance}.

Figure \ref{bModel} shows the results. The figure confirms that the quadratic fitting function leads to the best approximation. This is consistent with previous results reported  in literature \cite{feldmann2003indoor,jung2013distance}.

The
user encounters detection can be performed locally on
the device and forwarded to the server for processing.
\item \textbf{The WiFi Model:}
 This model is based on the WiFi signals. The idea is that users in the vicinity of each other should hear similar APs  with close RSS values. Therefore, to detect the encounter of two users, we check the similarity of their collected WiFi RSS measurement vectors. If the similarity is greater than a certain threshold, we detect the users' encounter. 

\begin{figure}[!t]
        \centering
        \begin{subfigure}[b]{0.24\textwidth}
                \includegraphics[width=1\textwidth]{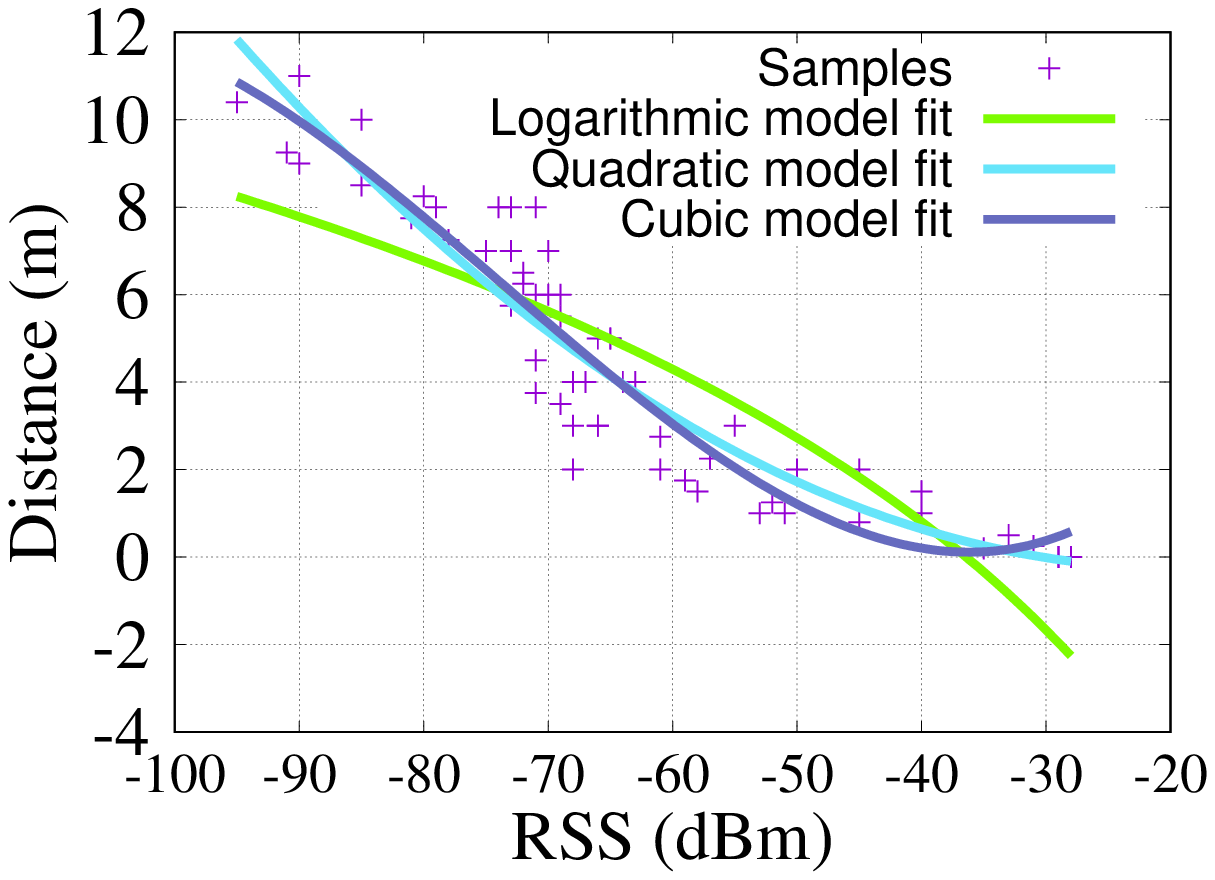}
           \caption{Fitting to raw data.}
           \label{bModel1}
        \end{subfigure}%
        \begin{subfigure}[b]{0.24\textwidth}
                \includegraphics[width=\textwidth]{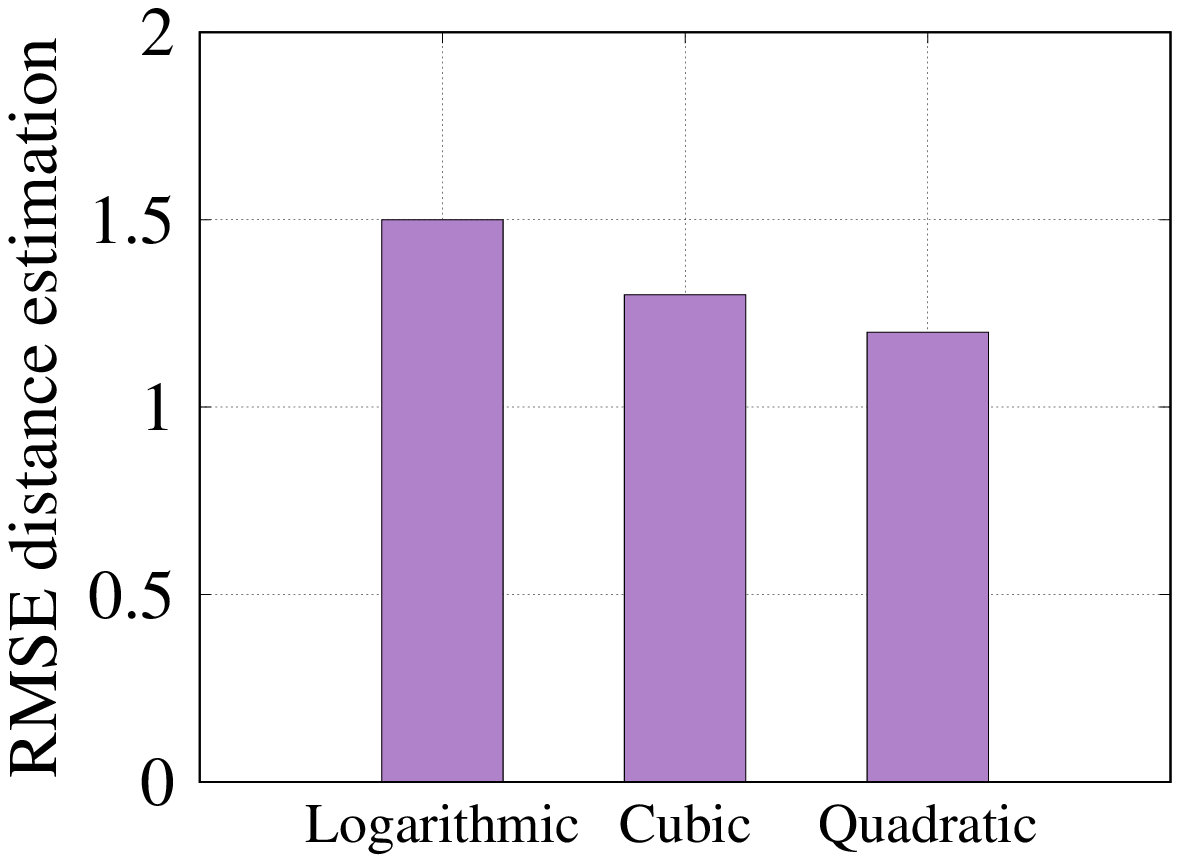}
              \caption{RMSE distance estimation for different Bluetooth models.}
            \label{bModel2}
        \end{subfigure}
        \caption{Comparison between different Bluetooth models.}
             \label{bModel}
\end{figure}

\begin{figure}[!t]
        \centering
        \begin{subfigure}[b]{0.24\textwidth}
                \includegraphics[width=1\textwidth]{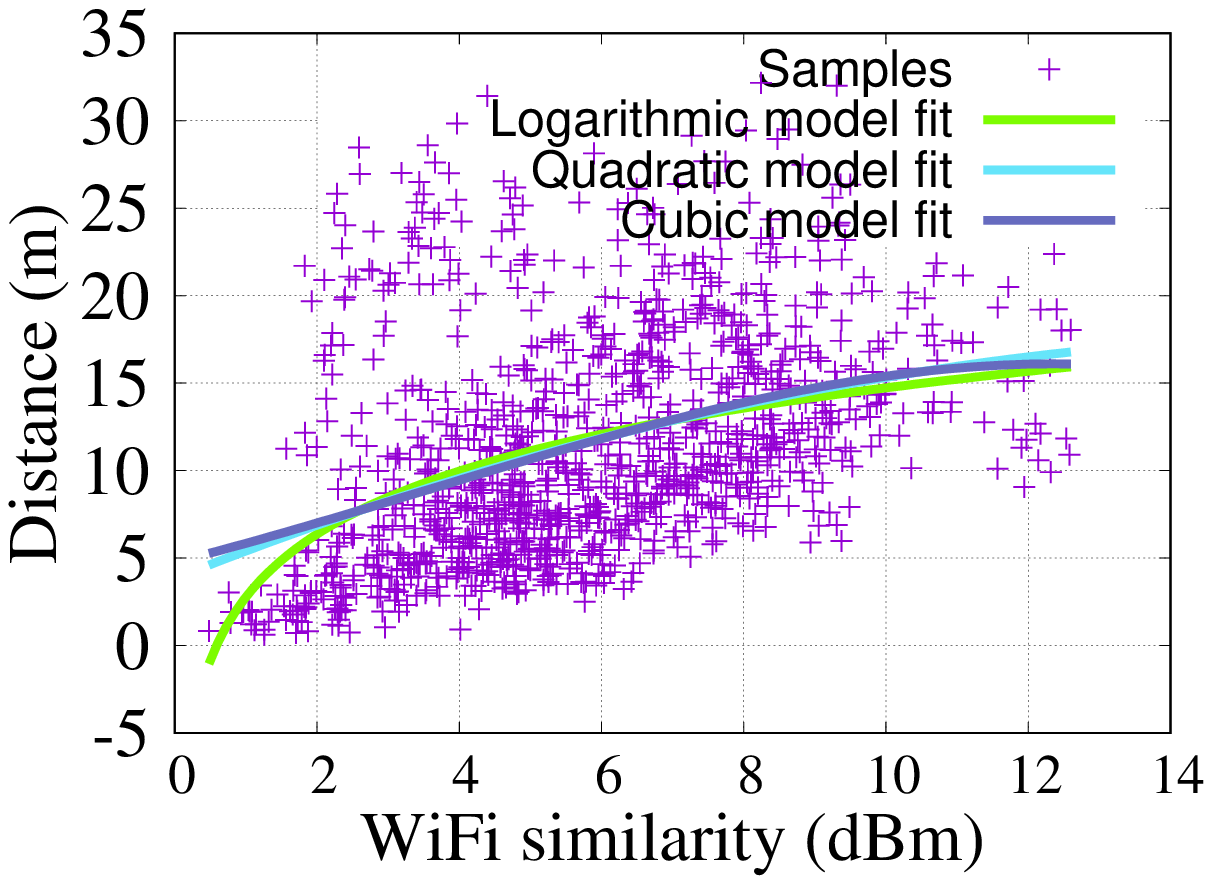}
           \caption{Fitting to raw data.}
           \label{wModel1}
        \end{subfigure}%
        \begin{subfigure}[b]{0.24\textwidth}
                \includegraphics[width=\textwidth]{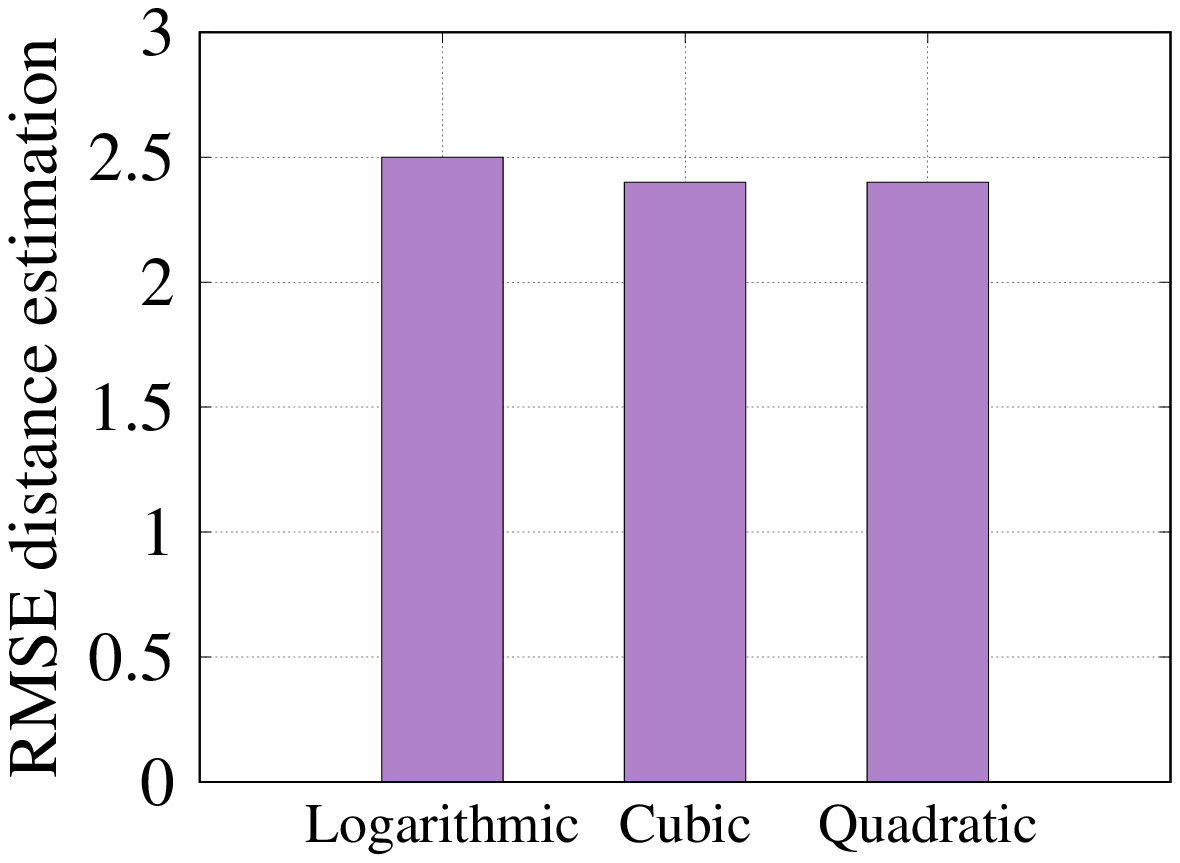}
              \caption{RMSE distance estimation for different WiFi models.}
            \label{wModel2}
        \end{subfigure}
        \caption{Comparison between different WiFi models.}
             \label{wModel}
\end{figure}

 To determine this threshold, we conducted an experiment to build  a spatial physical-distance to WiFi-space distance model. 
Our WiFi-similarity measure takes into account the wireless channel noise that may lead to different number of APs heard in the same location at different times. Specifically, the WiFi similarity measure is calculated as:
 \begin{equation}
 \textrm{Sim}_{\textrm{WiFi}} = \frac{||U_1-U_2||}{n}    
 \end{equation}
 where $n$ is the number of \textit{\textbf{commonly} heard} access points by both users and $U_i$ is the RSS vector of the common APs heard by user $i$. Note that the normalization by the number of APs makes the similarity measure independent of the number of heard APs. 
 Moreover, one can restrict the RSS vector to the strongest $N$ APs to reduce the noise effect. This is quantified in Section~\ref{sec:eval}.

Figure~\ref{wModel1}  shows the relation between the spatial space
Euclidean distance and the WiFi-space similarity measure. Figure~\ref{wModel2} further compares the accuracy of the estimated relative users’ distance using different fit models. 
The figure shows that different models give comparable performance. We therefore use the logarithmic model as it is commonly used in the propagation literature \cite{bahl2000radar}.
The figure also shows that
the WiFi samples are noisier than the Bluetooth samples. This is due to the higher range of WiFi signals
compared to Bluetooth, which makes them more susceptible to noise.

 Finally, since WiFi chips cannot scan for other users' WiFi signal during their normal operation\footnote{WiFi scans are performed between the WiFi chip on the client and the APs installed in the environment to determine the best AP to associate to. To scan for other nearby users' WiFi signals, the device needs to be in monitoring mode, which disables the normal transmission operation.}, WiFi-based users' encounters are performed at the \sys{} server, where the WiFi scans of nearby APs are collected from each user and a module on the server calculates the similarity and performs the encounter detection. 
\end{enumerate}
 \setlength{\belowcaptionskip}{2pt}
\begin{table}
\centering 
\caption{definitions and notations}\label{t1}
\scalebox{0.9}{
\begin{tabularx}{\linewidth}{| c | X |} \hline
	\centering 
	Notation  & Definition\\ \hline \hline
	
	$\hat{u}_t$      & Control data at time $t$\\ \hline   
	
	$\hat{l}_t$      &   Estimated displacement at time $t$ from the sensors\\ \hline
	
	$\hat{L}_t^{[m]}$      &  Estimated step length for particle $[m]$ at time $t$\\ \hline
	
	$\hat{\phi}_t$      &   Estimated heading change  at time $t$ \\ \hline
	
	$\Phi_t^{[m]}$      &  Sampled heading change for the particle $[m]$ at time $t$\\ \hline
	
	$\hat{s}_{t}^{[m]}$  & Predicated location  at time $t$ for the particle $[m]$\\ \hline
	${s}_{t}$  &   User location  at time $t$\\ \hline
	${s}_{t}^{a, [m]}$  &   User $a$'s location as seen by particle $[m]$ at time $t$\\ \hline
	$s_{n,t}^{[m]}$  &   User location as seen by particle  $[m]$ after refining using anchor $n$ at time $t$\\ \hline
	${S}^{t}$  &  entire location  history ($s^t=s_1,s_2,....)$\\ \hline

	$\mu_{s_{t}}^{ a, [m]}$  &  Mean of estimated location $s_t$ of particle $[m]$ of user $a$ at time $t$\\ \hline
	
	$\Sigma_{s_{t}}^{ a, [m]}$  &  Covariance matrix of estimated location $s_t$ of particle $[m]$ of user $a$ at time $t$\\ \hline
	
	$\mu_{s_{n,t}}^{[m]}$  &  Mean of estimated location $s_t$ of particle $[m]$ given anchor $n$ at time $t$\\ \hline
	
	$\Sigma_{s_{n,t}}^{[m]}$  &  Covariance matrix of estimated location $s_t$ of particle $[m]$ given anchor $n$ at time $t$\\ \hline

	$z_{n,t}^a$  &   \textbf{Estimated} measurement of  anchor $n$ position at time $t$ by  user $a$ \\ \hline
	$\hat{z}_{n,t}^{a,[m]}$  &    \textbf{Predicted} measurement  of particle $[m]$ for user $a$ of observed  anchor $n$ at time $t$ \\ \hline 
	
	$Q_{n,t}^{ [m]}$  &   Covariance matrix of observing anchor $n$ at time $t$ by particle $[m]$\\ \hline
	
	
	$P_{t}$  & Covariance matrix of control data at time $t$ \\ \hline
	
	$R_t$  &   Measurement covariance at time $t$ \\ \hline
	
	$K_{t} ^{[m]}$  &  Kalman gain for particle $[m]$ at time $t$\\ \hline

	$\theta$& anchor map\\ \hline
	$\theta_{n_t}$& Location of anchor $n$ at time $t$\\ \hline
	
	$N_{t}^{[m]}$  &  Number of anchors for particle $[m]$ at time $t$\\ \hline

	$\mu_{n,t} ^ {[m]}$ &  Mean of anchor $n$ location at time $t$ for particle $[m]$ \\ \hline
	
	$\Sigma_{n,t}^{[m]}$ &  Covariance matrix of anchor $n$ at time $t$ for particle $[m]$\\ \hline
	
	$ w_t^{[m]}$  &  Weight of particle $[m]$ at time $t$ \\ \hline
	
	$f_t$  &  anchor type (e.g. elevator) \\ \hline
	
	$\hat{f}_t$  &  Detected anchor type (e.g. elevator pattern) \\ \hline
	
	$ p_0$&  Likelihood of observing a anchor for the first time \\ \hline
	
	$ p_n $ &  Likelihood correspondence of anchor $n$ with the observed pattern $\hat{f}$ at time $t$ \\ \hline
	
	$ \eta $ & Normalization factor  \\ \hline
	
	$  J_{n,t}$&  The Jacobian   of  measurement model  for anchor $n$ at time $t$ \\ \hline
	
	$ \beta $ & A threshold on the number of iterations for the iterative sampling technique \\ \hline

\end{tabularx}
}
\label{tab:not}
\end{table}
\subsection{Core SLAM Algorithm}
\sys{} is a variant of the FastSLAM \cite{montemerlo2002fastslam}
framework 
 optimized to operate with  anchors that can be detected based on the limited capabilities of phone-embedded sensors, as compared to robot sensors. 
\sys{} uses the particle filter to enhance user's position. It relies on dead-reckoning to track the user’s location and \textbf{predicts} the distance (i.e. offset) between the user location and anchor location (Motion model / Prediction model). On the other hand, it relies on the Bluetooth/WiFi models to get the \textbf{estimated} distance (Observation model). Finally, \sys{} combines both models using the particle filter equations.  
 
 \sys{} core algorithm consists of three steps: location sampling (motion update),
map update (observation update), and re-sampling.  Without loss of generality, we assume that only a single anchor is observed at any time instance $t$ to simplify the problem complexity. Note that, since DynamicSLAM needs around 150ms to get the estimated location while the typical human speed is 1.4m/s; the anchor can be assumed stationary during the estimation process. Table \ref{tab:not} summarizes the notations used in the paper.
\subsubsection{Motion Update}
It is invoked when a step is  detected based on the phone inertial sensors. At this time instance, the control signal $\hat {u}_t=<\hat{l}_t, \hat{\phi}_t>$, where $\hat {l}_t$ and $\hat {\phi}_t$ are the  estimated  step length and heading from the phone inertial sensors, updates the user's location from $s_{t-1}$ to $s_t$.  More formally,  given the control signal $\hat {u}_t$ at time $t$, we draw each particle $[m]$ according to:
 \begin{equation}
\label{eqsamp}
\hat{s}_t^{[m]} \sim P(s_t^{[m]}|s_{t-1}^{[m]},\hat{u}_t)
\end{equation}
Assuming Gaussian-distributed error for both $\hat{l}_t$
and $\hat{\phi}_t$, the sampled displacement and heading for particle $[m]$ at time $t$ are calculated as:
\begin{equation}
L_t^{[m]} \sim N(\hat{l}_t,\sigma_{l})\  ,\  
\phi_t^{[m]} \sim N(\hat{\phi}_t,\sigma_{\phi})
\end{equation}
where  
 $\sigma_{l}$ and $\sigma_{\phi}$ are the variance of the displacement and heading estimation errors respectively.
Therefore, the  new  sampled  location for a given user  in Eq.~\ref{eqsamp} can be rewritten as:
\begin{equation}\label{motion0}
s_t^{[m].\phi} = s_{t-1}^{[m].\phi} + \phi_t^{[m]}
\end{equation}
\begin{equation}\label{motion1}
s_t^{ [m].x} = s_{t-1}^{[m].x} + L_t^{[m]}. \cos(s_t^{[m].\phi})   
\end{equation}
\begin{equation}\label{motion2}
s_t^{[m].y} = s_{t-1}^{[m].y} + L_t^{[m]}. \sin(s_t^{[m].\phi})   
\end{equation}
where $s_t^{[m].\phi}, s_t^{ [m].x} , s_t^{[m].y} $ are the new user heading, x, and y coordinates of the location respectively after incorporating the  control input $\hat{u}_t$. 
\subsubsection{Observation Update}
At any time instance, the user may observe a human-based ($O_{LH}$) or building ($O_{LB}$) anchor. In both cases, the current user's location belief is updated. In addition, for the case of the building anchor, the feature map is also updated to refine the observed anchor location. The balance of this subsection describes the details of both cases.

\textbf{Observing a Human-based anchor}
When user  \textit{Alice} encounters another user  \textit{Bob}, her location should be updated based on their relative distances $z_t$ and this distance covariance matrix, i.e., the confidence in the estimated location. Note that, since each user is considered a anchor for the other user, \textbf{no map update} is invoked in this case. To  update \textit{Alice}'s  location  based on Bob's encounter, we propose two different techniques that trade-off the accuracy and efficiency: One-shot sampling and iterative sampling.

\begin{enumerate}
\item{One-shot Sampling:} \\
To account for the relative distance  measurement error between the two users, we incorporate the EKF to update the user's location  belief based on the observation. The EKF captures the deviation of the predicted measurement, $\hat{z}_{b,t}^{a, [m]}$, (based on the estimated users' locations) from the estimated measurement, $z_{b,t}^{a}$,  (based on the two observation models described in Section~\ref{sec:humanLM}), taking into account the confidence in the different quantities

Specifically, \textit{Alice}'s  \textbf{predicted} distance measurement, i.e., the location offset between \textit{Alice}'s (a) and \textit{Bob}'s  (b) locations is calculated as: 
   \begin{align}\label{observationz}
\hat{z}_{b,t}^{a,[m]}  & = ||\hat{s}_{t}^{a,[m].(x,y)} - s_{t}^{b,[m].(x,y)}||
\end{align}

 The mean and covariance matrix of \textit{Alice}'s location  distribution  after incorporating \textit{Bob}'s encounter is:
\begin{equation}\label{sam-prob-mu}
\mu_{s_{b,t}}^{a,[m]} =\hat{s}_{t}^{a,[m]} + K_t^{[m]} (z_{b,t}^{a}-\hat{z}_{b,t}^{a,[m]}) 
\end{equation}
\begin{equation}\label{sam-prob-cov}
\Sigma_{s_b,t} ^{a, [m]}  = [\ J_{s_b,b}^T\  (Q_{b,t}^{ [m]})^{-1} J_{s_b,b}^T \ +\  P_{t}^{-1}]^{-1}
\end{equation}	
Where $J_{s_b,b}^T$  is the transpose of the Jacobian of the measurement model with respect to Bob's location $\mu_b= s_{t}^{b,[m].(x,y)}$, $P_t$ is the covariance matrix of the control data at time $t$, $Q_{b,t}^{[m]}$ is the human-based anchor observation covariance matrix given by:
\begin{equation}
Q_{b,t}^{[m]} = J_{b,t} \Sigma_{b,t-1}^{[m]} J_{b,t}^T+ R_t
\end{equation}
and $R_t$ is the measurement covariance matrix. Finally, $K_t^{[m]}$ is the Kalman gain given by:
\begin{equation}
K_t^{[m]}=\Sigma_{s_{b,t}} ^{a, [m]}\  (J_{s_b,b}^{a})^T\  (Q_{b,t}^{[m]})^{-1}\ 
\end{equation}
Finally, the new location of a particle $[m]$ of \textit{Alice} after incorporating Bob's encounter is sampled from the following distribution: 
\begin{equation}
         s_{b,t}^{a,[m]} \sim N(\mu_{s_{b,t}}^{a,[m]},\Sigma_{s_{b,t}}^{ a, [m]})
 \end{equation}
Intuitively, when a user encounters another user, her location is updated based on her previous state, the confidence of the two users' locations, and the innovation (the difference between the estimated and predicted relative distance). Algorithm~\ref{alg:refine} summarizes the process of refining the particle position.
\begin{algorithm}[!t]
	\caption{Refine\_Particle\_Position ($\hat{s}^{[m]}_{t},[m],n)$} 
	\label{alg:refine} \small
        \begin{algorithmic}[1]
		\Require
	    \Statex $\hat{s}^{[m]}_{t}$: Predicated location at time $t$ for the particle $[m]$.
	    \Statex $[m]$: Particle index.
	    \Statex $n$: Anchor index.
	    \newline
	    	\State $\hat{z}_{n,t}^{[m]} = ||\mu_{n,t-1}^{[m]}- \hat s_{t}^{[m]} ||$ \Comment{predict measurement}
		\State $K_t^{[m]}= \Sigma_{s_{n,t}}^{[m]}\ J_{s,n}^T (Q_{{n},t}^{[m]})^{-1}$   \Comment{Kalman gain}
		
		\State $\Sigma_{s_n,t}^{[m]}  = [\ J^T_{s,n}\  Q_{n,t}^{ [m]-1} J_{s,n} \ +\  P_{t}^{-1} ]^{-1}$ \State\Comment{covariance of proposal distribution}

		\State $\mu_{s_n,t}^{[m]}  =\hat{s}^{[m]}_{t} +K_t^{[m]}\ (z_{n,t}-\hat{z}_{n,t}^{[m]})$ \State\Comment{mean of proposal distribution}

		\State $s_{n,t}^{[m]} \sim N(\mu_{s_n,t}^{ [m]},\Sigma_{s_n,t}^{ [m]})$ \Comment{sample pose}
	    
	\end{algorithmic}
\end{algorithm}
\begin{algorithm}[!t]
	\caption{DynamicSLAM ($z_t,u_t,S_{t-1})$} 
	\label{alg:socialSLAM} \small
	\begin{algorithmic}[1]
	    \Require
	    \Statex $S_{t-1}$: System state at time $t-1$. 
	    \Statex $u_t$: Control data at time $t$.
	    \Statex $z_t$: Measurement observation at time $t$.
	    \newline
	    \For{m = 1 to M} \Comment{particle m}
		\State Retrieve $\big \langle \langle s_{t-1}^{[m]},w_{t-1}^{[m]}\rangle,
		\langle \mu_{1,t-1}^{[m]},\Sigma_{1,t-1}^{[m]},f_{1,t-1}^{[m]}\rangle,..,\newline
		\langle \mu_{N,t-1}^{[m]},\Sigma_{N,t-1}^{[m]},f_{N,t-1}^{[m]}\rangle \big \rangle$ from $S_{t-1}$  \State \Comment{get particle's old state}
		
		\State  $\hat s_t^{[m]} \sim P(s_t|s_{t-1}^{[m]},\hat{u}_t)$ \Comment{motion model- predict  pose}

		\If {Human anchor $h$ is observed} \State \Comment{observation model}

		\State \textit{Refine\_Particle\_Position}($\hat{s}_{t}^{[m]},[m],h$) \State \Comment{refine user's pose}

		\Else \Comment{building anchor}

		\For{n = 1 to $N_{t-1}^{[m]}$} \Comment{calculate sampling distribution}
		\State \textit{Refine\_Particle\_Position}($\hat{s}_{t}^{[m]},[m],n$) 
		\State \Comment{refine user's pose}
		\State $p_n =|2\pi Q_{n,t}|^{ -\frac{1}{2}}exp\{-\frac{1}{2} (\hat{z}_{n,t}^T .Q_{n,t}^{-1} \hat{z}_{n,t}\}p(f_t|\hat{f}_t)$  \State \Comment{association likelihood} 
		\EndFor
		
		\State $p_{1+N_{t-1}^{[m]}} = \eta p_0 p(f_t|\hat{f}_t) $ \Comment{likelihood of new anchor}
		\State $\hat{n} = argmax_{\{n \in 1,..,1+N_{t-1}^{[m]}\}} p_n$ \Comment{maximum likelihood}
		\State $N_{t}^{[m]} = \max\{N_{t-1}^{[m]} , \hat{n}\}$ \Comment{new number of anchors}
		\State \textit{Map\_Update}($N_{t-1}^{[m]}, \hat{n}, [m]$) \Comment{map update}
		\State $w_t^{[m]} = w_{t-1}^{[m]} . p_{\hat{n}}^{[m]}$ \Comment{importance weight}
		\EndIf
		\EndFor
		\State \textbf{loop} M times  
		\State \ \ \  \ Draw with replacement a random index $m$ with probability $\propto\  w_t^{[m]}$  \Comment{resampling}
		\State \textbf{end loop}
		\State add $\big \langle \langle s_{t}^{[m]},w_{t}^{[m]}\rangle,
		\langle \mu_{1,t}^{[m]},\Sigma_{1,t}^{[m]},f_{1,t}^{[m]}\rangle,....,\langle\mu_{N,t}^{[m]},\Sigma_{N,t}^{[m]},f_{N,t}^{[m]}\rangle \big \rangle$ to $S_{t}$
	\end{algorithmic}
\end{algorithm}   
\begin{algorithm}[!t]
	\caption{Map\_Update ($N_{t-1}^{[m]}, \hat{n}, [m]$)} 
	\label{alg:map_update} \small
	\begin{algorithmic}[1]
		\Require
	    \Statex $N_{t-1}^{[m]}$: Number of anchors seen by particle $[m]$ at time $t$-$1$. 
	    \Statex $\hat{n}$: anchor index with the highest probability.
	    \Statex $[m]$: Particle index.
	    \newline
		\If {$\hat{n} = 1 + N_{t-1}^{[m]} $} \Comment{New anchor detected}
		\State $\mu_{\hat{n},t}^{ [m]} = s_t^{[m]}$ \Comment{initialize its mean}
		\State $\Sigma_{\hat{n},t}^{[m]} = R_t$ \Comment{initialize its covariance}
		\Else 
		\State $K_t^{[m]}= \Sigma_{\hat{n},t-1}^{[m]}  (Q_{\hat{n}_t,t}^{[m]})^{-1}$ \Comment{Kalman gain}
		\State $\mu_{\hat{n},t}^{[m]} = \mu_{\hat{n}_t,t-1}^{[m]} - K_t^{[m]}\  \hat{z}_{\hat{n},t}^{ [m]} $ \Comment{update mean}
		\State $\Sigma_{\hat{n},t}^{[m]} = (I-K_t^{[m]}) \Sigma_{\hat{n},t-1}^{[m]} $ \Comment{update covariance}
		\EndIf
		
	\end{algorithmic}
\end{algorithm}
\item{Iterative Sampling:}\\
In this technique, the one-shot sampling described in the previous section is repeated on the same pair of users until the particle position converges. To determine convergence, the algorithm stops when the difference between two successive location updates for particle $[m]$ is less than a certain threshold $\beta$.

The iterative sampling approach has the advantage of increased accuracy at the expense of increased computations (as we quantify in Section~\ref{sec:eval}).
\end{enumerate}
\textbf{Observing a Building Anchor: }
Similar to the human-based anchors, encountering a building-based anchor, e.g., an elevator, updates the user's  position using the EKF. In addition, the system needs to (I) resolve the uncertainty in the type of detected anchors, i.e., stairs vs elevators vs other types, due to the inherent noise in the phone sensors and (II) Update the building anchor location.

(I) Data Association (Detecting building anchor type)
To account for this, we compute the conditional probability of actually observing each anchor type $(f)$ when the detected
pattern is of type $\hat f$. This uncertainty is a function of the sensors noise and is independent from the building. It is typically described by a confusion matrix, where each entry indicates the probability $p(f|\hat{f})$ of the anchor $(f)$ given pattern $\hat{f}$.
Given this confusion matrix and the anchor location uncertainty $Q_n$ for each anchor $n$ in the particle's map,
the likelihood of correspondence with the observed pattern  is calculated as \cite{abdelnasser2016semanticslam}
\begin{equation}
p_n^{[m]} = \eta |2\pi Q_{n,t}^{[m]}|^{-\frac{1}{2}} exp{\{-\frac{1}{2} (\hat{z}_{n,t}^{[m]})^{T} Q_{n,t}^{[m]-1} \hat{z}_{n,t}^{[m]}\}}.p(f_t|\hat{f}_t)
\end{equation}
Where $\eta$  is a normalization factor. This
likelihood takes into account the distance
between the current user's location and the building anchor, the
location uncertainty of both, and the confusion between the
observed and actual anchor types ($f_t$ vs $\hat{f}_t$).

We assume that the probability of observing
a newly unseen anchor given the detected pattern is
calculated as:

\begin{equation}
p_n^{[m]} =  \eta p_0.p(f_t|\hat{f}_t)
\end{equation}
where $p_0$ is a constant determined empirically.
\noindent \textit{(II) Updating anchor Location (\textbf{Map update})}
The anchor with the highest probability
(${\hat n}_t$) is selected as the observed anchor
and its location is updated using the standard EKF
formulas. (Algorithm~\ref{alg:map_update}).
    \subsubsection{Resampling}
 The last step  assigns a weight to each particle
to reflect the confidence of its  location. The particles' weights are initialized uniformly and
updated at each step based on the likelihood of the detected anchors, i.e., the anchor ${\hat{n}_t}$ with the highest
probability, as:
\begin{equation}
w_{t}^{[m]} = w_{t-1}^{[m]}.p_{\hat{n}_t}
\end{equation}
These weights are normalized to add to one. Then, the location and the map of the particle having the
maximum weight are selected as the current estimate. Finally, the algorithm  draws with replacement $M$ new particles to avoid degeneration.
Algorithm~\ref{alg:socialSLAM} summarizes the full \sys{} algorithm.
\subsection{\sys{} Convergence}
\label{sec:conv}
In this section, we argue about the system convergence as well as the effectiveness of human anchors ability to reset the accumulated error. Here convergence refers to reducing the error in the users' locations to zero. 
Assume that user $x$ encounters an anchor $n$ (human or building anchor). We define
the error in user $x$ location $\Psi_t^{x,[m]}$ as follows:
\begin{equation}
     \Psi_t^{x,[m]} = \mu_{s_t}^{x,[m]} - s_t^x
\end{equation}
Note that the ground-truth user location $s_t^x$ is used in the previous equation only to prove \sys{} convergence and is not needed during the actual system operation.
From equations \ref{motion0}-\ref{motion2} and \ref{observationz},
$\hat{s}_t^{x,[m]} = s_{t-1}^{x,[m]} + u_t$,
$\hat{z}_t^{x,[m]} = \mu_{n,t-1}^{[m]} - s_{t-1}^{x,[m]} - u_t$. Also, by setting
$J_n = I$,
$J_s = -I$, and
$Q_{n,t}^{[m]} = R_t +\Sigma_{n,t-1}^{[m]}$, the mean and covariance of Eq.~\ref{sam-prob-mu} and \ref{sam-prob-cov} of the
proposal distribution are defined as a follows
\begin{equation}\label{mu-sample}
\begin{aligned}
\mu_{s_{n,t}}^{x,[m]} =
-\Sigma_{s_n,t}^{x,[m]} ( R_t +\  \Sigma_{n,t-1}^{[m]})^{-1} \cdot\ 
(z_{n,t}^a\ -\ \mu_{n,t-1}^{[m]}\ +\\ s_{t-1}^{x,[m]}\ +\ u_t)\ +\  
s_{t-1}^{x,[m]}\ +\ u_t
\end{aligned}
\end{equation}
\begin{equation}\label{cov-sample}
\Sigma_{s_n,t} ^{x, [m]}  = [( R_t + \Sigma_{n_t,t-1}^{[m]})^{-1}  \ +\  P_{t}^{-1}]^{-1}
\end{equation}
First, we study the human anchor ability to reset the accumulated error.
\begin{lemma}\label{lem1}
If user $a$ with error $\Psi_t^{a,[m]}$ encounters another user $b$ with error $\Psi_t^{b,[m]}$, where $\Psi_t^{a,[m]} < \Psi_t^{b,[m]}$, then $\Psi_t^{b,[m]}$ will shrink in expectation. Conversely, if $\Psi_t^{a,[m]} > \Psi_t^{b,[m]}$, then the later may increase but will not exceed $\Psi_t^{a,[m]}$ in expectation.   
\end{lemma}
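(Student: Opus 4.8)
The plan is to show that an encounter replaces user $b$'s location error by a matrix-weighted average of the two users' errors, with weights that sum to the identity; the two halves of the lemma then follow from the elementary fact that such an average lies ``between'' its two inputs. Throughout I treat the human-anchor update of Algorithm~\ref{alg:refine} with the anchor $n$ identified with user $a$, so that $\mu_{n,t-1}^{[m]}=\mu_{s_t}^{a,[m]}$ and $\Sigma_{n,t-1}^{[m]}=\Sigma_{s_t}^{a,[m]}$, and I write the relative measurement as the true offset corrupted by zero-mean noise, $z_{n,t}^{a}=s_t^{a}-s_t^{b}+\varepsilon$. Averaging over $\varepsilon$ is precisely what produces the ``in expectation'' qualifier in the statement.

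First I would substitute the linearization $J_n=I$, $J_s=-I$, $Q_{n,t}^{[m]}=R_t+\Sigma_{n,t-1}^{[m]}$, together with $\hat{s}_t^{b,[m]}=s_{t-1}^{b,[m]}+u_t$ and $\hat{z}_t^{b,[m]}=\mu_{n,t-1}^{[m]}-\hat{s}_t^{b,[m]}$, into the mean and covariance of the proposal distribution, Eq.~\ref{mu-sample}--\ref{cov-sample}. Writing $A=(R_t+\Sigma_{n,t-1}^{[m]})^{-1}$ and $\Sigma=\Sigma_{s_n,t}^{b,[m]}$, the covariance identity Eq.~\ref{cov-sample} reads $\Sigma^{-1}=A+P_t^{-1}$, whence $\Sigma A=I-\Sigma P_t^{-1}$. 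The innovation in Eq.~\ref{mu-sample} collapses, in expectation, to $z_{n,t}^{a}-\hat{z}_t^{b,[m]}=\hat\Psi_t^{b,[m]}-\Psi_t^{a,[m]}$, where $\hat\Psi_t^{b,[m]}=\hat s_t^{b,[m]}-s_t^{b}$ is the error carried into the encounter and coincides with $\Psi_t^{b,[m]}$ just before the update.

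The crux is then to subtract the ground truth $s_t^{b}$ from the refined mean $\mu_{s_{n,t}}^{b,[m]}$ and collect terms. Using $K_t^{[m]}=-\Sigma A$ (from $J_s=-I$) I expect this to yield
\begin{equation}
\Psi_t^{b,[m]} = (I-\Sigma A)\,\hat\Psi_t^{b,[m]} + \Sigma A\,\Psi_t^{a,[m]},
\end{equation}
a combination of the incoming errors of $b$ and $a$ whose coefficient matrices $\Sigma A$ and $I-\Sigma A=\Sigma P_t^{-1}$ are positive semidefinite and sum to the identity. Reading this as an interpolation between $\Psi_t^{a,[m]}$ and $\hat\Psi_t^{b,[m]}$ settles both cases: when $a$ is the more accurate user the weighted average is pulled toward the smaller endpoint, so $b$'s error shrinks; when $a$ is the less accurate user the average can grow but cannot move past the larger endpoint $\Psi_t^{a,[m]}$.

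The main obstacle is that the weights here are matrices rather than scalars, so the scalar bound $\lVert \alpha v_1+(1-\alpha)v_2\rVert\le\max(\lVert v_1\rVert,\lVert v_2\rVert)$ does not transfer verbatim to vector-valued errors; two positive semidefinite matrices summing to $I$ can in general inflate the norm of a combination beyond the larger input. I would close this gap by exploiting the (near-)isotropic structure of the dead-reckoning and measurement covariances, under which $\Sigma A=\alpha I$ with $\alpha\in(0,1)$ and the update reduces to a genuine scalar convex combination $\Psi_t^{b,[m]}=(1-\alpha)\hat\Psi_t^{b,[m]}+\alpha\,\Psi_t^{a,[m]}$, making both monotonicity claims immediate; absent that assumption, I would phrase the comparison in the lemma in terms of the error magnitude along the dominant eigen-axis so that the stated inequalities between $\Psi_t^{a,[m]}$ and $\Psi_t^{b,[m]}$ are well defined and preserved.
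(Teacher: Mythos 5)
Your derivation is essentially the paper's own proof: the same linearization ($J_n=I$, $J_s=-I$, $Q_{n,t}^{[m]}=R_t+\Sigma_{n,t-1}^{[m]}$) and linear-SLAM expectation of the innovation yield exactly the paper's Eq.~\ref{finalexp}, i.e.\ the updated expected error as the matrix-weighted average $(I-\Sigma A)\Psi_{t-1}^{\mathrm{self}}+\Sigma A\,\Psi_{t-1}^{\mathrm{other}}$ with $\Sigma A=[I+(R_t+\Sigma_{b,t-1}^{[m]})P_t^{-1}]^{-1}$, merely with the roles of users $a$ and $b$ interchanged. The matrix-versus-scalar caveat you raise is precisely the step the paper glosses over by declaring that inverse a ``contraction matrix'' and treating the vector errors as ordered scalars, so your isotropic/dominant-axis patch makes explicit an assumption the paper leaves implicit rather than departing from its argument.
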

\begin{proof}
The expected error of user $a$ at time $t$ is given by:
\begin{equation}
    E[\Psi_t^{a,[m]}] = E[\mu_{s_t}^{a,[m]}- s_t^a] = E[\mu_{s_t}^{a,[m]}] - E[s_t^a] 
\end{equation}
We can get the first term from the sampling distribution (Eq~\ref{mu-sample}) which depends on user location, anchor location, estimated measurements, and the uncertainty in anchor and user locations. The second term can be obtained from the motion model (Eq~\ref{motion1} and \ref{motion2}). Therefore:
\begin{equation}\label{expecteps}
\begin{aligned}
    E[\Psi_t^{a,[m]}] =
    -\Sigma_{s_b,t}^{a,[m]} ( R_t +\  \Sigma_{b,t-1}^{[m]})^{-1}\ \cdot\  
(E[z_{b,t}^a]\\ -\ \mu_{b,t-1}^{[m]}\ +\ s_{t-1}^{a,[m]}\ +\ u_t)\ +\ 
\Psi_{t-1}^{a,[m]}
\end{aligned}
\end{equation}
For linear SLAM, $E[z_{b,t}^a]= s_t^{b,[m]} - E[s_t^{a,[m]}]
= s_t^{b,[m]} - u_t-s_{t-1}^{a}$. Hence,
\begin{equation}\label{expz}
    \begin{aligned}
        E[z_{b,t}^a] - \mu_{b,t-1}^{[m]} + s_{t-1}^{a} + u_t 
=\  s_t^{b,[m]} - u_t-s_{t-1}^{a} \\ - \mu_{b,t-1}^{[m]} + s_{t-1}^{a,[m]} + u_t
=\ \Psi_{t-1}^{a,[m]} - \Psi_{t-1}^{b,[m]}
    \end{aligned}
\end{equation}
By substitution from equations \ref{cov-sample} and \ref{expz} in \ref{expecteps},
\begin{equation}\label{finalexp}
    \begin{aligned}
        E[\Psi_t^{a,[m]}]
    = \ \Psi_{t-1}^{a,[m]}\ +\ 
     [I\ +\ ( R_t +\  \Sigma_{b,t-1}^{[m]}) P_t^{-1}]^{-1}\\ \cdot\ 
(\Psi_{t-1}^{b,[m]} - \Psi_{t-1}^{a,[m]}) 
    \end{aligned}
\end{equation}
As $R_t$, $\Sigma_{b,t-1}^{[m]}$, and $P_t^{-1}$ matrices  are positive semidefinite, the value of $[I\ +\ ( R_t +\  \Sigma_{b,t-1}^{[m]}) P_t^{-1}]^{-1}$ is a contraction matrix. Hence, $E[\Psi_t^{a,[m]}]$ depends on the difference $\Psi_{t-1}^{b,[m]} - \Psi_{t-1}^{a,[m]}$. If $\Psi_{t-1}^{b,[m]} < \Psi_{t-1}^{a,[m]}$, then user $a$ location error will be reduced in expectation. If $\Psi_{t-1}^{b,[m]} > \Psi_{t-1}^{a,[m]}$, then the $E[\Psi_t^{a,[m]}]$ will increase by a certain amount depending on the difference, which makes the upper bound for the expected error, $E[\Psi_t^{a,[m]}]$,  $\Psi_{t-1}^{b,[m]}$. 
\end{proof}
Now, we study the effect of a user that encounters a building anchor with known location.
\begin{lemma}\label{lem3}
If a user $a$ encounters a building anchor with known location $n^*$, then the error in her location will shrink in expectation.
\end{lemma}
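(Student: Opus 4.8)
The plan is to treat the building anchor with known location as a degenerate special case of the human-anchor encounter already analyzed in Lemma~\ref{lem1}. Since Algorithm~\ref{alg:refine} refines the particle position by the same EKF update for both human and building anchors, the entire derivation leading to Eq.~\ref{finalexp} carries over after replacing the encountered user $b$ by the observed anchor $n^*$. The only---and crucial---difference is that the anchor's location is known, so both its error and its location covariance vanish: writing $\theta_{n^*}$ for the true anchor location, we have $\Psi_{t-1}^{n^*,[m]} = \mu_{n^*,t-1}^{[m]} - \theta_{n^*} = 0$ and $\Sigma_{n^*,t-1}^{[m]} = 0$.

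First I would substitute $\Psi_{t-1}^{n^*,[m]} = 0$ and $\Sigma_{n^*,t-1}^{[m]} = 0$ into the expected-error recurrence of Eq.~\ref{finalexp}. This collapses the bracketed gain matrix to $[I + R_t P_t^{-1}]^{-1}$ and the innovation term to $-\Psi_{t-1}^{a,[m]}$, so that $E[\Psi_t^{a,[m]}] = (I - [I + R_t P_t^{-1}]^{-1})\,\Psi_{t-1}^{a,[m]}$. The error is thus mapped through a single fixed linear operator, and it remains only to show that this operator strictly shrinks its argument.

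Next I would verify that the resulting multiplier is a contraction. Writing $A = R_t P_t^{-1}$, the matrix $(I+A)^{-1}$ has eigenvalues $1/(1+\lambda_i)$ for the eigenvalues $\lambda_i \ge 0$ of $A$, so $I - (I+A)^{-1}$ has eigenvalues $\lambda_i/(1+\lambda_i) \in [0,1)$. Because $R_t$ and $P_t$ are symmetric positive (semi)definite---exactly as invoked in Lemma~\ref{lem1}---the relevant matrices are symmetric, so these eigenvalue bounds translate directly into an operator-norm bound strictly below one whenever $R_t \neq 0$. Hence $\|E[\Psi_t^{a,[m]}]\| < \|\Psi_{t-1}^{a,[m]}\|$ and the user's location error shrinks in expectation, which is the claim.

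The main obstacle I anticipate is the bookkeeping that justifies the two substitutions rigorously rather than the contraction estimate itself. One must argue that ``known location'' legitimately forces both $\Psi_{t-1}^{n^*,[m]}$ and $\Sigma_{n^*,t-1}^{[m]}$ to zero within the EKF update, and check that setting $\Sigma_{n^*,t-1}^{[m]} = 0$ does not make the observation covariance $Q_{n^*,t}^{[m]} = R_t + \Sigma_{n^*,t-1}^{[m]}$ singular---it reduces cleanly to $R_t$, which is invertible, so the gain is well defined. Once this specialization of Lemma~\ref{lem1} to a zero-error, zero-covariance anchor is established, the remainder is the routine eigenvalue argument above.
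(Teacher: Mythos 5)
Your proposal matches the paper's own proof essentially verbatim: the paper likewise sets the known anchor's error and covariance to zero ($\Psi^{n^*}=0$, $\Sigma_{n^*,t}^{[m]}=0$) and substitutes into Eq.~\ref{finalexp} to obtain $E[\Psi_t^{a,[m]}] = \Psi_{t-1}^{a,[m]} - [I + R_t P_t^{-1}]^{-1}\Psi_{t-1}^{a,[m]}$, relying on the contraction property already invoked in Lemma~\ref{lem1}. Your explicit eigenvalue verification that $I - (I+R_tP_t^{-1})^{-1}$ has spectrum in $[0,1)$ is a welcome addition the paper leaves implicit (note only that $R_tP_t^{-1}$ is in general similar to, rather than equal to, a symmetric positive semidefinite matrix, so the norm bound holds in a $P_t$-weighted norm), but the route is the same.
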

\begin{proof}
The building anchor with known location has zero error $\Omega_{t}^{n^*,[m]}$ = 0 and $\Sigma_{n^*,t}^{[m]}= 0$. By substitution in Eq.~\ref{finalexp}
\begin{equation}\label{finalexpland}
   \begin{split}
    E[\Psi_t^{a,[m]}]
    = \ \Psi_{t-1}^{a,[m]}\ -\ 
     [I\ +\ R_t \cdot P_t^{-1}]^{-1}\cdot \Psi_{t-1}^{a,[m]}
    \end{split}
\end{equation}
\end{proof}
The following theorem proves our system convergence.
\begin{theorem}\label{thm1}
\sys{} converges even for $M$=1 particles if there is at least one building anchor whose location is known in advance.
\end{theorem}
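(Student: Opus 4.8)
The plan is to combine Lemma~\ref{lem3} with the contraction property of the error-update matrix to drive the error of the user who sees the fixed anchor to zero, and then to invoke Lemma~\ref{lem1} to propagate this reset to every other user through human-anchor encounters. Since $M=1$, there is a single particle whose expected error I track directly; convergence is understood, as in Lemmas~\ref{lem1} and~\ref{lem3}, in the sense that the expected error $E[\Psi_t]$ tends to zero.

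First I would specialize the recurrence of Lemma~\ref{lem3} to the single particle. Dropping the particle index, an encounter with the known-location anchor $n^*$ gives
\begin{equation}
E[\Psi_t^{a}] = \bigl(I - [I + R_t P_t^{-1}]^{-1}\bigr)\,\Psi_{t-1}^{a},
\end{equation}
and I denote the error-update matrix by $C_t = I - [I + R_t P_t^{-1}]^{-1}$. The crux is to show that $C_t$ is a strict contraction. Although $R_t P_t^{-1}$ is not symmetric, it is similar to the symmetric positive-definite matrix $P_t^{-1/2} R_t P_t^{-1/2}$, hence has strictly positive real eigenvalues; every eigenvalue of $I + R_t P_t^{-1}$ therefore exceeds $1$, every eigenvalue of $[I + R_t P_t^{-1}]^{-1}$ lies in $(0,1)$, and so every eigenvalue of $C_t$ lies in $(0,1)$, giving $\rho(C_t) < 1$.

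Next I would iterate over the successive times $t_1 < t_2 < \cdots$ at which the user observes $n^*$. Each observation applies a contraction, so after $k$ observations
\begin{equation}
E[\Psi_{t_k}^{a}] = \Bigl(\textstyle\prod_{i=1}^{k} C_{t_i}\Bigr)\,\Psi_{0}^{a}.
\end{equation}
Because the sensor and control covariances are bounded and bounded away from degeneracy, there is a uniform bound $\rho(C_{t_i}) \le \gamma < 1$, so the product decays geometrically and $E[\Psi_{t_k}^{a}] \to 0$. Thus even a single particle that repeatedly sees the fixed anchor has its expected location error driven to zero. To extend this to every user, I would chain Lemma~\ref{lem1}: once the anchor-observing user attains a small error, any user encountering her inherits an error that shrinks toward hers and never grows beyond the larger of the two, so the zero-error information percolates through the sequence of human-anchor encounters to the whole population.

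The hard part will be the strict-contraction step — specifically ruling out a unit eigenvalue of $C_t$, which relies on the positive-\emph{definiteness} (not merely semidefiniteness) of the covariances $R_t$ and $P_t$, together with the uniform spectral bound $\gamma$ that is needed so the infinite product actually collapses to zero rather than stalling in some invariant direction. The propagation argument also implicitly assumes that the relevant encounters recur often enough; formally this requires that the user revisits $n^*$ (and that encounters connect all users) infinitely often, an ergodicity-type hypothesis that I would state explicitly rather than prove.
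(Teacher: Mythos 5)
Your proposal is correct and follows the same two-step route as the paper's own proof: an encounter with the known-location anchor shrinks the observer's expected error (Lemma~\ref{lem3}), and Lemma~\ref{lem1} then percolates this reset through human encounters to the rest of the population. The difference is one of rigor rather than of strategy: the paper's proof is a three-sentence qualitative argument (some users ``will eventually observe'' the known anchor, encounters only help the less accurate user, hence errors ``will always reduce''), whereas you make the anchor step quantitative by isolating the error-update matrix $C_t = I - [I + R_t P_t^{-1}]^{-1}$, proving $\rho(C_t)<1$ via similarity to $P_t^{-1/2}R_tP_t^{-1/2}$, and deriving geometric decay; you also state explicitly the recurrence/connectivity and nondegeneracy hypotheses that the paper leaves entirely implicit. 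This buys an actual convergence rate where the paper only asserts monotone improvement. Two refinements to your own list of hard parts: first, a unit eigenvalue of $C_t$ cannot occur for any finite $R_t$ and nonsingular $P_t$, since the eigenvalues are $\lambda_i/(1+\lambda_i)<1$ with $\lambda_i\ge 0$ the (real, nonnegative) eigenvalues of $R_tP_t^{-1}$; what positive definiteness and boundedness genuinely buy is the uniform bound $\gamma<1$, i.e.\ keeping $\lambda_i$ bounded above. Second, for the product $\prod_{i=1}^{k} C_{t_i}$ of non-commuting matrices, a uniform spectral-radius bound is not by itself sufficient (spectral radius is not submultiplicative); you should pass to a uniform operator-norm bound, e.g.\ $\|P_t^{-1/2}C_tP_t^{1/2}\|_2=\rho(C_t)\le\gamma$ combined with uniformly bounded condition numbers of $P_t$, which the boundedness assumption you already invoke supplies. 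Finally, your product formula tacitly uses that motion updates between anchor visits leave the \emph{expected} error unchanged (the zero-mean control noise contributes only $\Psi_{t-1}^{a,[m]}$ in Eq.~\ref{expecteps}); this holds in the paper's Gaussian model and is worth stating, since it is what lets you compress the dynamics to the observation times $t_1<t_2<\cdots$ alone.
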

\begin{proof}
Assuming that the building contains at least one of building anchor with known location, some of the system users will eventually observe the building anchors, which according to Lemma 2 will lead to reducing their error. Similarly,  when two users encounter each other, the one with the lower accuracy will enhance his estimate based on the other user (from Lemma 1). Therefore, the error of the different users in \sys{} will always reduce, leading to convergance. 
\end{proof}

\subsection{Discussion}
\sys{} can have different deployment modes. In this paper, we assumed that
	the system operations are performed on a centralized server. However, the
	system can be completely implemented on devices that communicate with each others in a peer-to-peer manner. The
	devices in this case will broadcast their own estimated locations, confidence
	in the estimated locations, and RSS heard from other devices. Note that, unlike the traditional collaborative SLAM-based techniques, \sys{} does not need to exchange particles local maps as the users' locations are updated based on their local maps only.

	The peer-to-peer implementation is more private. It can solve the latency and connection problems due to the centralized process. 
	However, making all computations on smart-phones may be
	an overhead due to the limited computational power on smart-phones. In both cases, cryptographic approaches can be used to prevent malicious users from introducing errors in the system, e.g. by broadcasting fake or wrong locations. 
\section{Evaluation}
\label{sec:eval}
In this section, we describe the testbed and the data collection methodology followed by evaluating the performance of the system components under different parameter settings. We end the section by comparing \sys{} performance against other systems and quantifying its energy footprint.
 \subsection{Testbed \& Data Collection Methodology}
We deployed \sys{} in the engineering building in our university campus. The building area is 3000$m^2$ containing offices, labs, meeting rooms as well as corridors.
 The building has 30 building anchors (physical and organic). The data is collected by 23 users using different Android phones (e.g., LG Nexus
5, Samsung Galaxy Note 3, Galaxy 4, Galaxy Tab, among others). The data is then processed in a centralized server. Each user walked around in a random independent path in the building for at least 250m passing by some building anchors (10 on average) as well as encountering other users (7 on average). 
If a user encounters another group of users, the system selects the user with the highest confidence. 
  \setlength{\belowcaptionskip}{-2pt}
\begin{table}[!t]
\centering
  \caption{Default parameter values}
\label{default}
\scalebox{0.7}{
\begin{tabular}{|l|l|l|}\hline
   Parameter & Range & Default value \\ \hline \hline
   Measurement Model & Bluetooth, WiFi  & WiFi \\ \hline
   Location update technique & One-shot sampling, Iterative sampling  & Iterative sampling \\ \hline
   Number of particles & 1 to 100 & 75 \\ \hline
   User inverse density\textbf{$(m^2/$user)} & 3 to 3000 & 30 \\ \hline
   RSS threshold (Bluetooth model) & -70 to -95 & -90 \\ \hline   
   WiFi threshold (WiFi model) & 2 to 14 & 8 \\ \hline
    Number of APs (WiFi model) & 1 to 16 & 5 \\ \hline
   Iterative sampling threshold $\beta$ & 0 to 1 & 0.1 \\ \hline
\end{tabular}
  }
\end{table}
\subsection{Effect of Changing \sys{} Parameters}
In this section, we study the effect of the different parameters on
the system performance including the user encounter detection method, user density in the building, sampling method, and number of particles.
 Table \ref{default} shows the
default parameters values used throughout the evaluation section.
\subsubsection{Users' encounter observation models}
\textbf{Bluetooth model}
Figure \ref{fig:bluetooth} shows the accuracy of the estimated relative users' distance (in terms of Root Mean Square Error) at different cut-off Bluetooth RSSs. The figure shows that the higher the cut-off threshold, i.e. the stronger the Bluetooth signal, the lower the relative distance error. However, requiring a strong Bluetooth signal reduces the users' encounter opportunities due to the lower range. For example, setting the cut-off RSS threshold to -70dBm, the encounter detection range is around 4m (From Figure~\ref{bModel}), and the RMSE of relative distance is around 1m.
\setlength{\belowcaptionskip}{-7pt}
\begin{figure*}[!t]
\minipage{0.23\textwidth}
  \includegraphics[width=\linewidth]{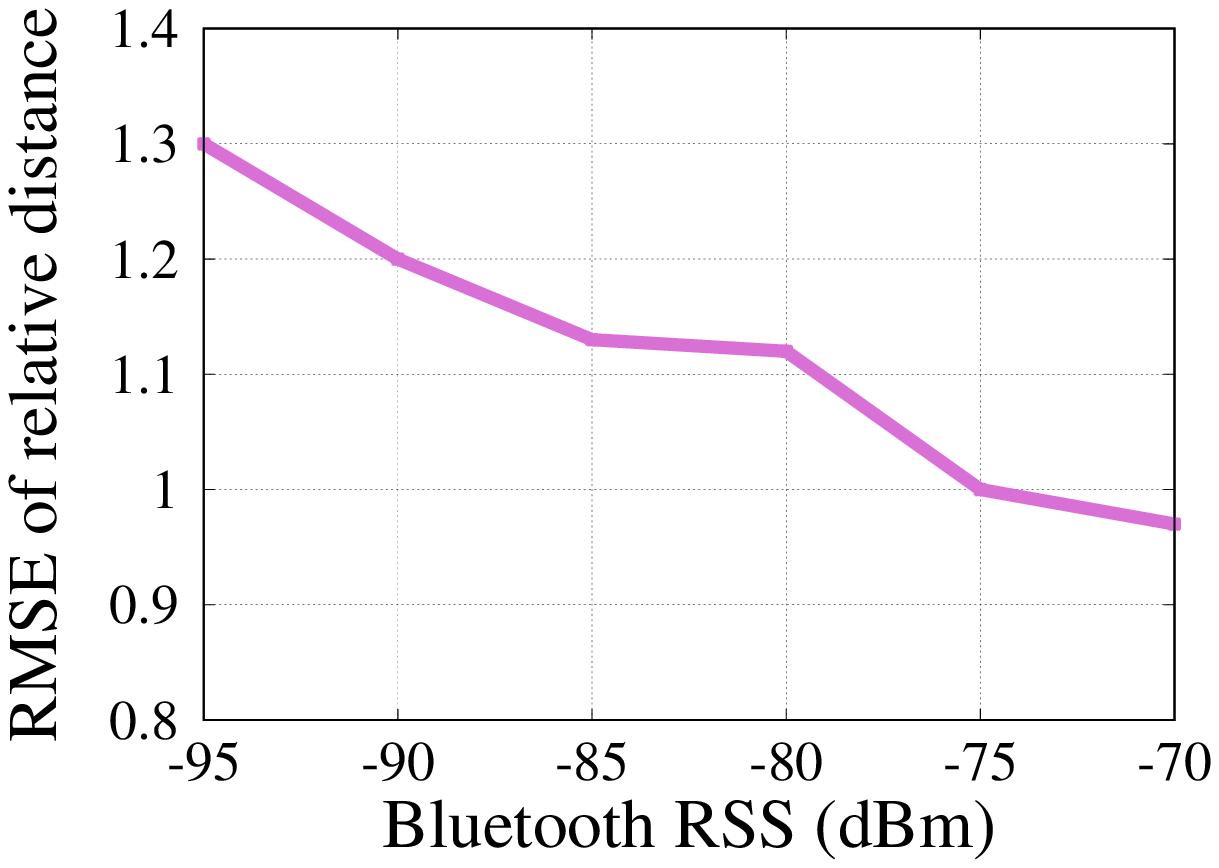}
  \caption{Relative distance error versus different Bluetooth RSSs.}
  \label{fig:bluetooth}
\endminipage\hfill
\minipage{0.23\textwidth}
  \includegraphics[width=\linewidth]{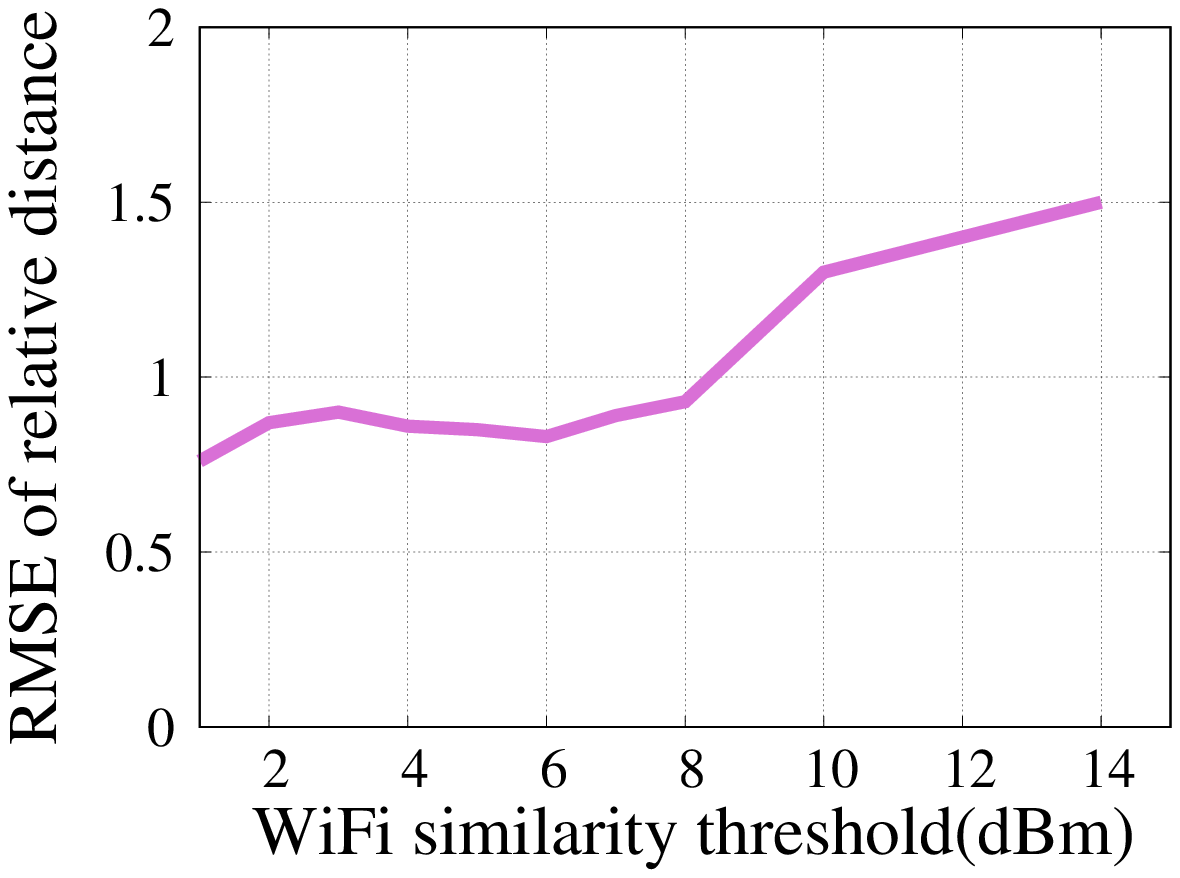}
  \caption{Relative distance error versus WiFi similarity threshold.}
  \label{fig:wifi}
\endminipage\hfill
\minipage{0.23\textwidth}
\includegraphics[width=\linewidth]{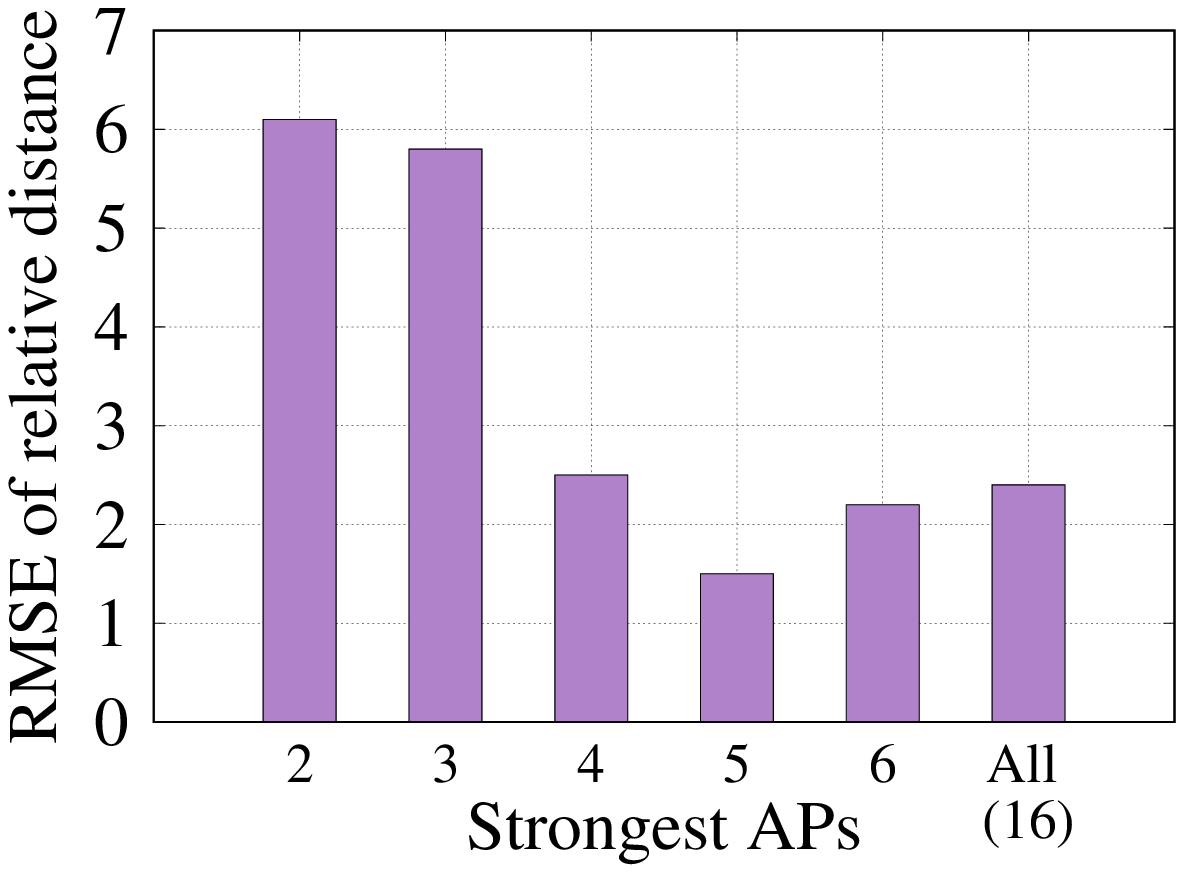}
\caption{Relative distance error using different number of WiFi APs.}
\label{fig:ap}
\endminipage\hfill
\minipage{0.23\textwidth}
  \includegraphics[width=\linewidth]{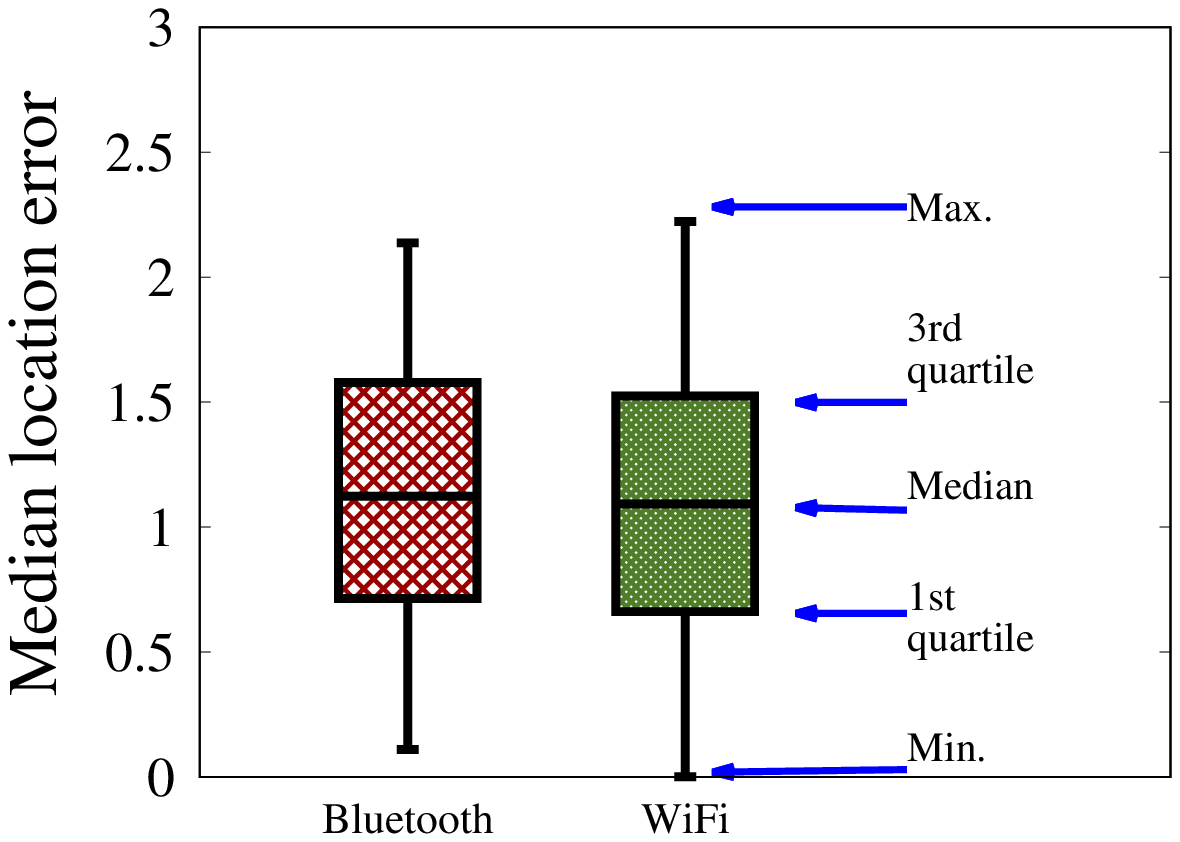}
  \caption{Effect of relative distance observation models on the location error.}
  \label{fig:compare}
\endminipage\hfill
\end{figure*}
\\
\textbf{WiFi Model}
Similarly, decreasing the similarity threshold for the WiFi signal leads to better accuracy at the expense of a shorter range (Figure~\ref{fig:wifi}). We use a WiFi threshold that is equal to 8dBm in our experiments. Figure~\ref{fig:ap} further shows the effect of controlling the number of APs used in the similarity calculation. The figure shows that increasing the number of the strongest APs used in similarity calculations leads to higher accuracy until we reach an optimal point at five APs. This can be explained by noting that a few number of APs leads to ambiguity in the user location while a large number of APs adds noise to the APs vector, both lead to reduced accuracy.

\noindent\textbf{Models Comparison}
 Figure \ref{fig:compare} compares the box-plot of relative distance error estimation by the two encounter observation models. Evident from the figure, the Bluetooth and WiFi models have comparable performance. This is an interesting result as one should expect that Bluetooth should lead to better accuracy due to its shorter range. However, WiFi has more than one AP in a scan, compared to only one device for Bluetooth. This compensates for the noise due to the higher range of WiFi. 
  Finally, we note that the two observation models are independent and can be used together to provide both higher range and better accuracy.
\begin{figure}[!t]
        \centering
        \begin{subfigure}[b]{0.23\textwidth}
                \includegraphics[width=1\textwidth]{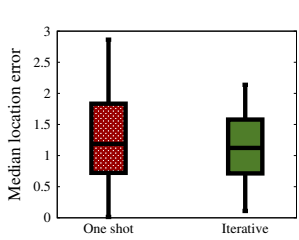}
           \caption{Localization error.}
           \label{fig:sampling_acc}
        \end{subfigure}%
        \begin{subfigure}[b]{0.23\textwidth}
                \includegraphics[width=\textwidth]{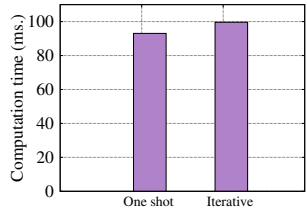}
              \caption{Computation time per estimate.}
            \label{fig:sampling_time}
        \end{subfigure}
        \caption{Comparison of system performance using one-shot sampling vs iterative sampling.}
             \label{fig:particle}
\end{figure}
\begin{figure}[!t]
        \centering
        \begin{subfigure}[b]{0.23\textwidth}
                \includegraphics[width=1\textwidth]{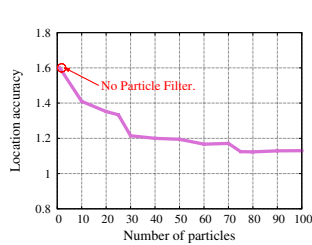}
           \caption{Localization error.}
           \label{fig:particles_acc}
        \end{subfigure}%
        \begin{subfigure}[b]{0.23\textwidth}
                \includegraphics[width=\textwidth]{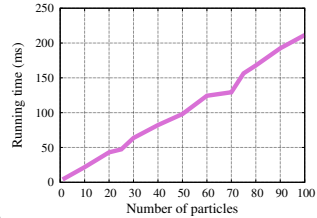}
              \caption{Computation time per estimate.}
            \label{fig:particles_time}
        \end{subfigure}
        \caption{Effect of  number of particles on system performance.}
             \label{fig:particles}
\end{figure}

\subsubsection{Sampling Method}
 Figure~\ref{fig:sampling_acc} shows the box-plot of the two sampling techniques used to update the user's location: one-shot and iterative sampling. The figure confirms that iterative sampling outperforms one-shot sampling in the median as well as the different quantile errors. This comes at the expense of increased computation time as shown in Figure \ref{fig:sampling_time}. 
\subsubsection{Number of Particles}
Figure~\ref{fig:particles_acc} shows the effect of increasing the number of particles on performance. 
The figure shows that as more particles are used to represent the user's location, the localization accuracy increases. This highlights that the particle filter does capture the dynamics of the system.
The figure also shows that as more particles are used to represent the user's location, the localization accuracy increases. Nonetheless, the computation time of the system increases linearly with the increase in the number of particles as shown in Figure \ref{fig:particles_time}. To find a comprise of the computation time- accuracy tradeoff, \sys{} uses only 75 particles. 

\begin{figure}[!t]
\centering
 \includegraphics[width=0.7\linewidth]{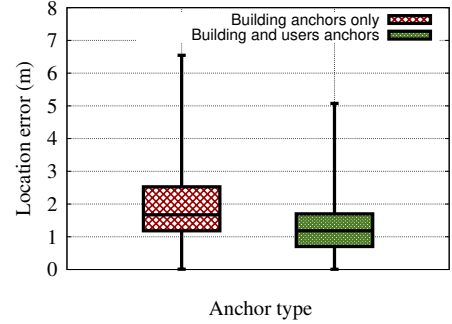}
  \caption{ Effect of using building anchors only and using users anchors on localization accuracy.}
  \label{fig:ww}
\end{figure}

\begin{figure}[!t]
\minipage{0.23\textwidth}
 \includegraphics[width=\linewidth]{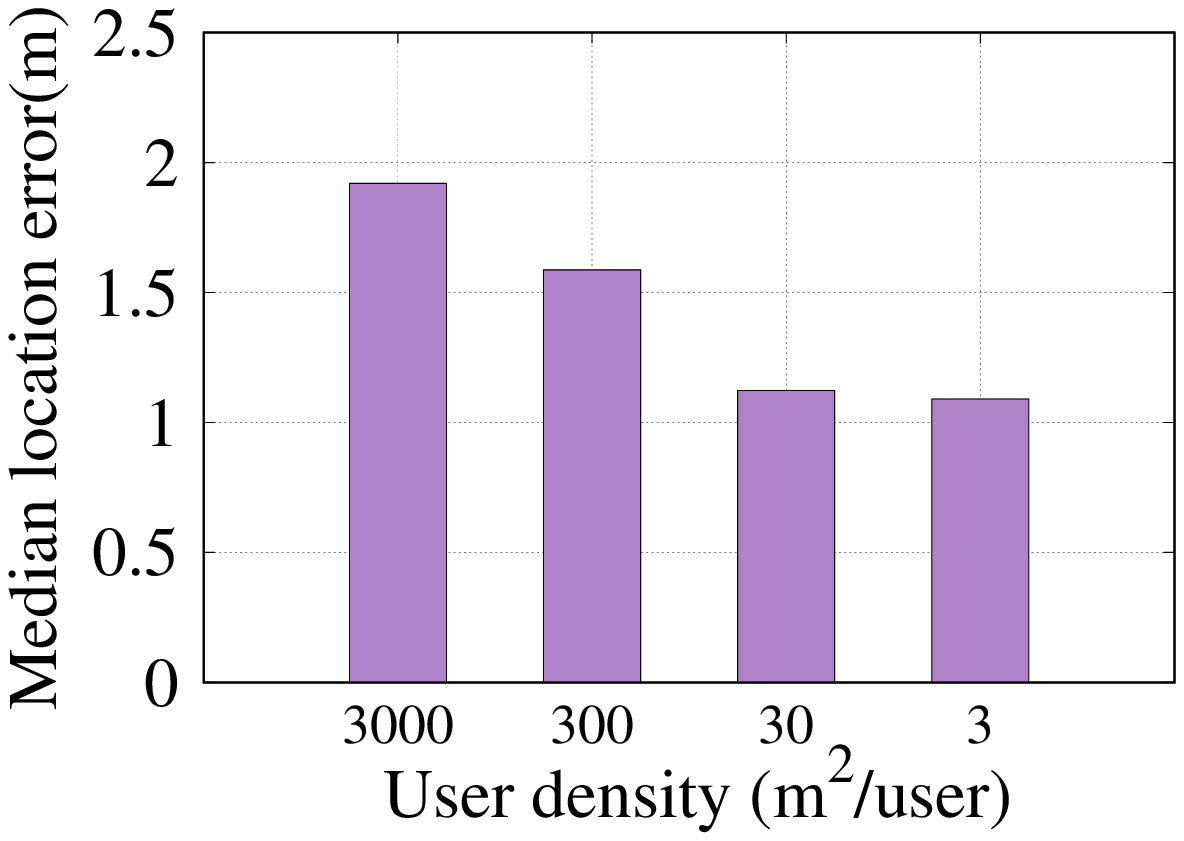}
  \caption{Effect of (\textbf{inverse}) user density on localization accuracy.}
  \label{fig:density}
\endminipage\hfill
\minipage{0.23\textwidth}
  \includegraphics[width=\linewidth]{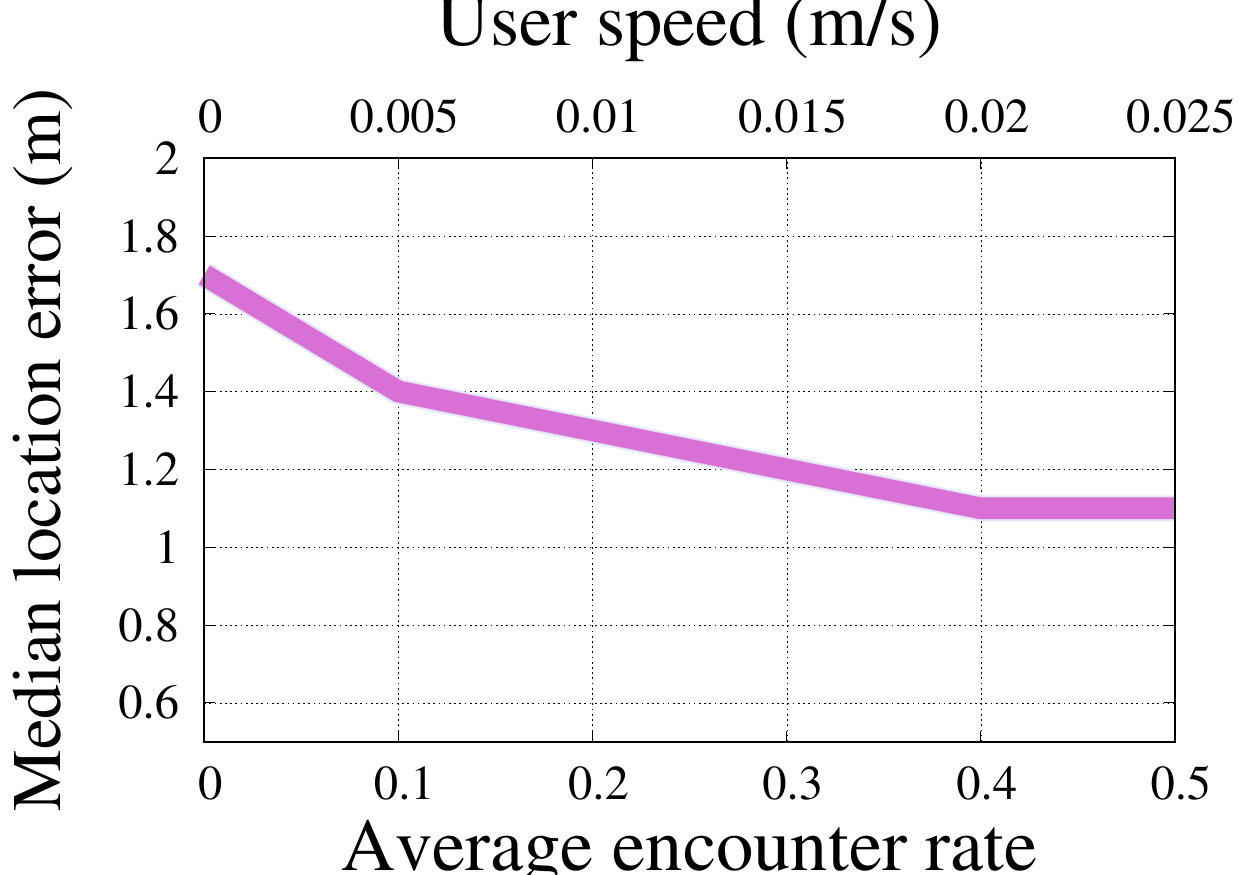}
\caption{ Effect of average encounter rate on localization accuracy.}
\label{encden}
\endminipage\hfill
\end{figure}

\begin{figure}[!t]
\minipage{0.23\textwidth}
 \includegraphics[width=\linewidth]{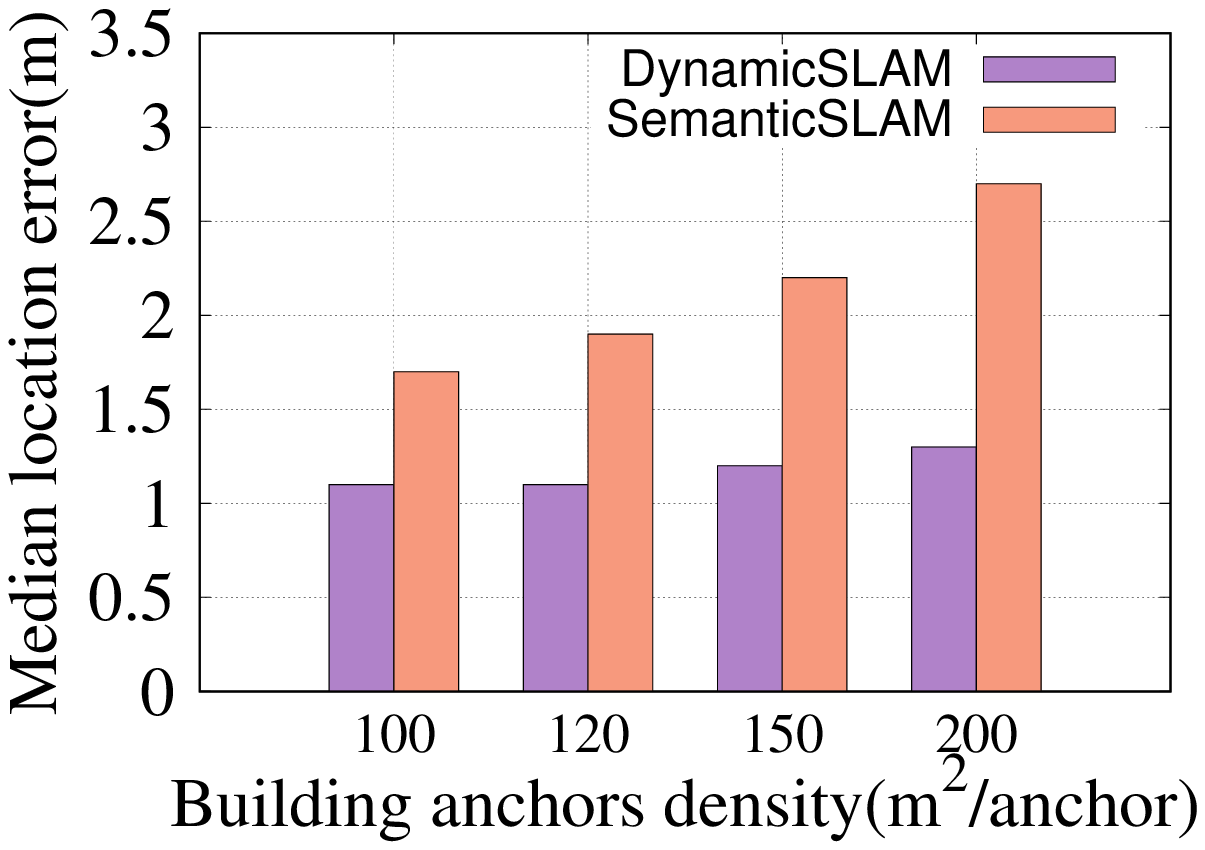}
  \caption{ Effect of (\textbf{inverse}) building anchor density on localization accuracy.}
  \label{anchdenisties}
\endminipage\hfill
\minipage{0.23\textwidth}
 \includegraphics[width=\linewidth]{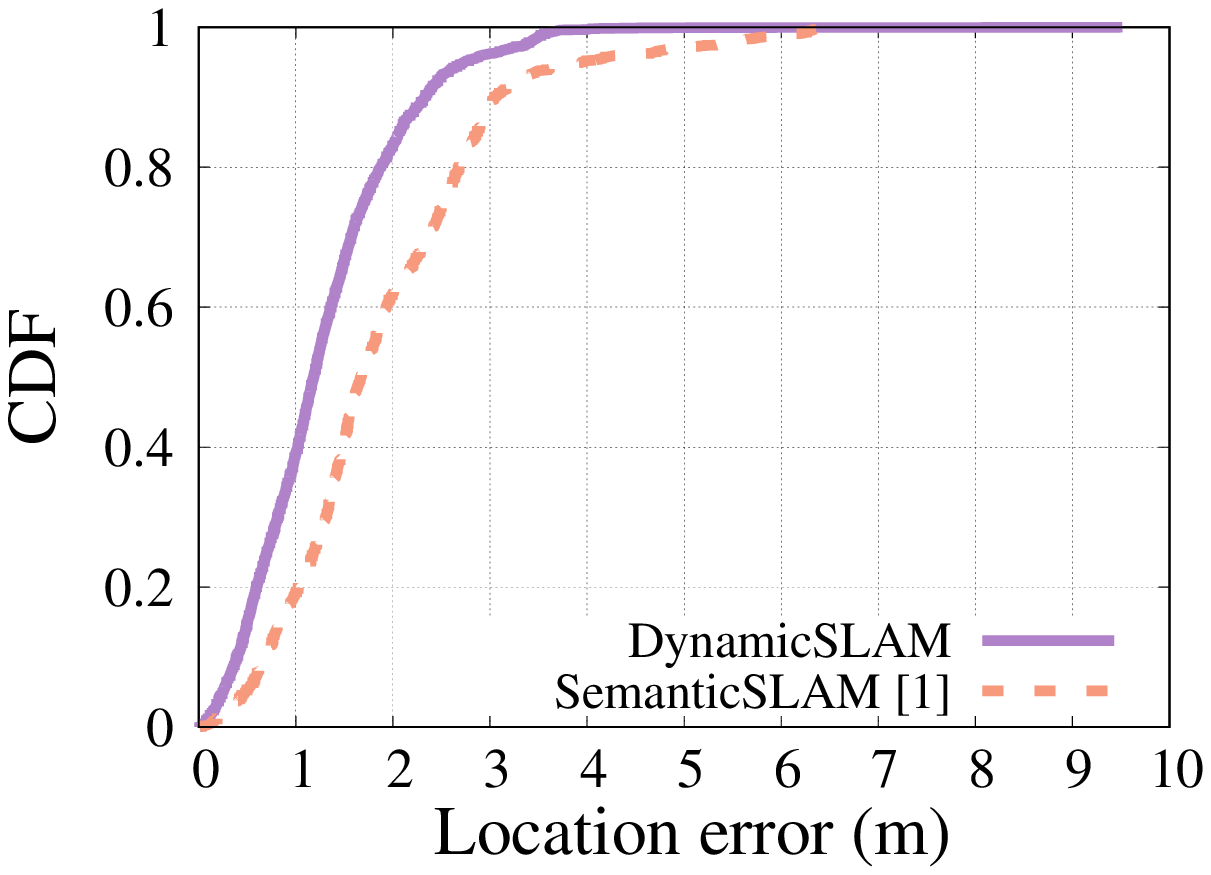}
  \caption{ Comparison between DynamicSLAM and SemanticSLAM.}
  \label{cdf}
\endminipage\hfill
\end{figure}

\subsubsection{User Density}
Figure~\ref{fig:ww} shows the effect of using the traditional building anchors only and using the new dynamic anchors on localization accuracy.
Figure \ref{fig:density} shows the effect of the number of $\text{meters}^2$ per user (\textbf{inverse} of the user density) on the median localization error. 
Figure~\ref{encden} further shows the effect of users' encounter rate on the localization accuracy.
 As
more users move around, there is a higher probability for users' encounters, leading to more opportunities of individual user location calibration, and hence better localization accuracy.
The secondary x-axis shows the user speed to achieve a certain number of encounters (linear relation) under the assumption that other users are uniformly and independently distributed in the area\cite{liu2005mobility}.

Note that, regardless of the user speed, since \sys{} needs around 150ms to get the estimated location while the typical human speed is 1.4m/s \cite{levine1999pace}; the user can be assumed stationary during the estimation process.

\subsection{Comparison With Other Systems}
In this section, we compare the localization accuracy of \sys{} against the SemanticSLAM technique that employs the SLAM approach based on the standard phone sensors \textit{using only traditional building anchors}. 
Figure \ref{anchdenisties} shows the comparison between our proposed \sys{} system and the SemanticSLAM system under different building anchors densities. The figure shows that reducing the building anchor densities affects the SemanticSLAM system accuracy while \sys{} maintains its accuracy. This is due to the fact that \sys{} leverages the users in the environment as anchors. Hence, its performance is less affected by the building anchors density.

Figure \ref{cdf} also shows the CDF of the localization accuracy for 
both systems and Table~\ref{tab:comp} summarizes the results. 
  \sys{} achieves a median distance error around 1.1m, which is better than SemanticSLAM by 55\%. Moreover, \sys{} reduces the worst case error by 29\% and enhances all other quantiles. This is due to the new human-based mobile anchors that provide ample opportunities for error resetting as illustrated in Fig~\ref{instant}.
\setlength{\belowcaptionskip}{0.1pt}
\begin{figure}[!t]
\minipage{0.23\textwidth}
 \includegraphics[width=\linewidth]{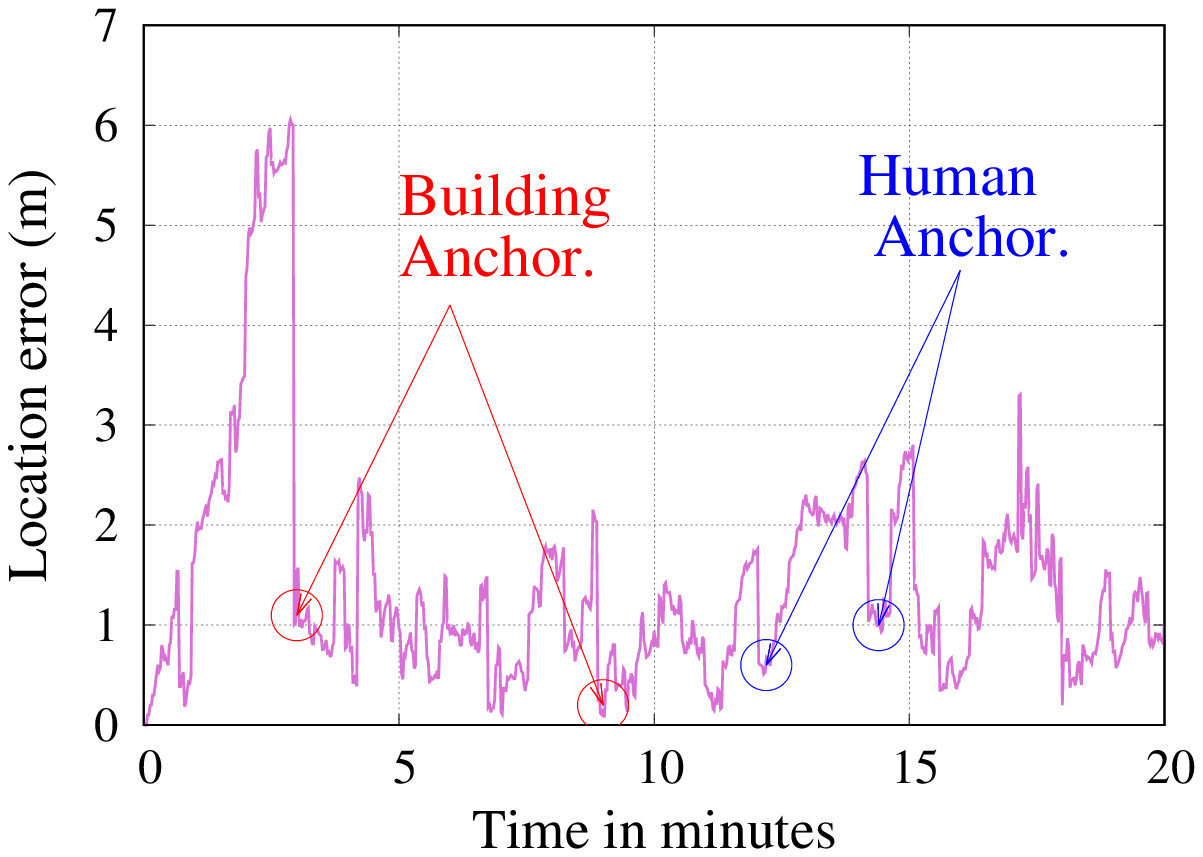}
  \caption{An example of accuracy over time.}
  \label{instant}
\endminipage\hfill
\minipage{0.26\textwidth}
  \includegraphics[width=\linewidth]{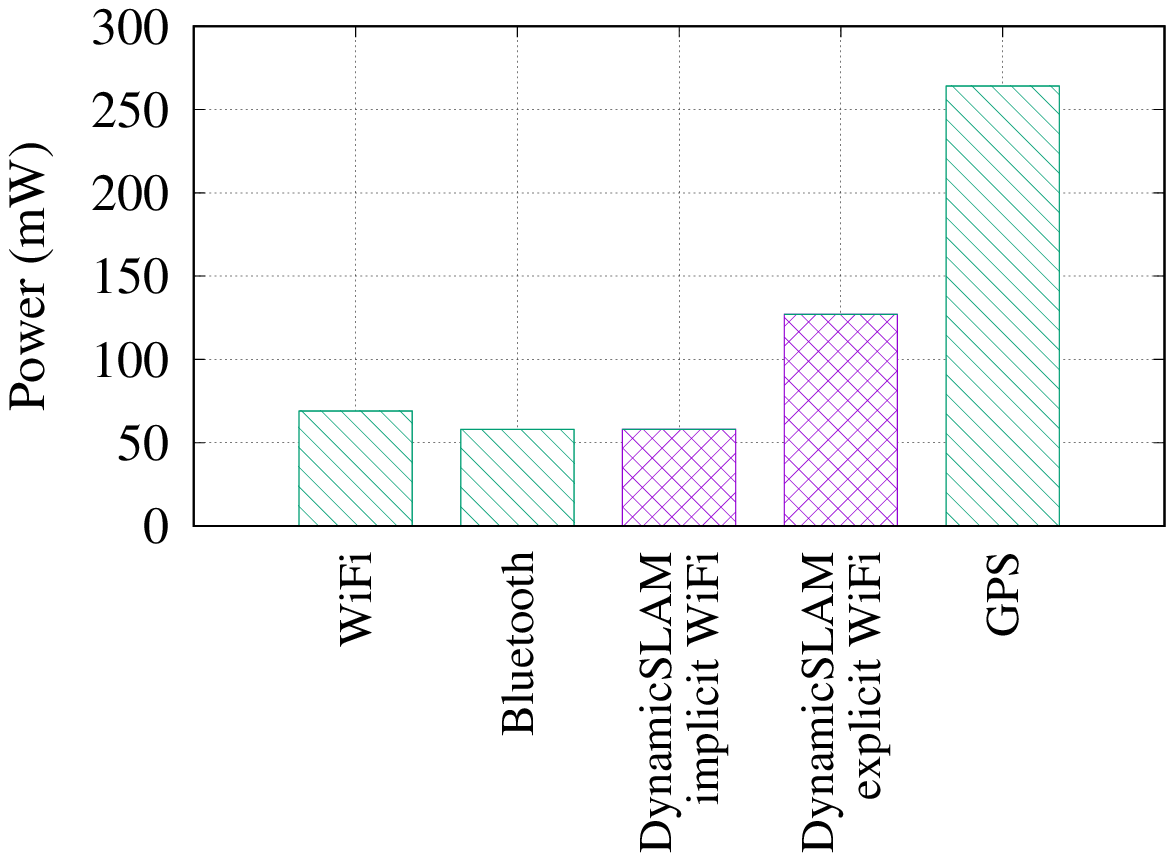}
\caption{Power consumption.}
\label{fig:power}
\endminipage\hfill
\end{figure}
\begin{table}[!t]
\centering
  \caption{Summary of the location accuracy percentiles of \sys{} vs SemanticSLAM.}
   \label{tab:comp}
\scalebox{0.73}{
\begin{tabular}{|l||l|l|l|l|l|}\hline
   Technique & \pbox{10cm}{$25^{th}$\\percentile }& \pbox{10cm}{$50^{th}$\\percentile}&\pbox{10cm}{$75^{th}$\\percentile}&\pbox{10cm}{$90^{th}$\\percentile}&Maximum\\ \hline \hline
       \pbox{10cm}{\sys{}}&0.7& \textbf{1.1}&1.7&2.3 &5.1 \\ \hline
    \pbox{10cm}{SemanticSLAM~\cite{abdelnasser2016semanticslam}}& 1.2 (-170\% )&\textbf{1.7} (-54\%)&2.6 (-52\%)&3.0 (-30\%)&6.6 (-29\%)\\ \hline
  \end{tabular}
  }
\end{table}
\subsection{Energy Footprint}
\sys{} harnesses inertial sensors, WiFi, and Bluetooth for its operation. The inertial sensors are always running to detect the phone orientation are used in a variety of mobile applications. Therefore, using them consumes virtually zero extra energy. Similarly, many users turn-on WiFi indoors to connect the Internet. Thus, as many sensors are activated by default for other purposes, this curbs the overall power consumption of \sys{}. Moreover, to further enhance \sys{} energy footprint, we adopt an adaptive sensing strategy where the sensors sampling rate and their activation are based on the nearby 
anchor density. For example,
if a certain area with high anchor density, \sys{} can decrease the sensor sampling rate as the location error will be frequently calibrated by these anchors. Additionally, Bluetooth and WiFi  can be turned off if the server predicts that the probability of different users' encounter is small.
Figure~\ref{fig:power} compares the power consumption (calculated using the PowerTutor \cite{yang2012powertutor}) of \sys{} against the GPS (at 2Hz) (used as a baseline).

\section{Related Work}
\label{sec:related}
\subsection{Indoor Localization}
Mobile phone localization has been well-studied in literature \cite{shokry2017tale,park2010growing,jin2011robust,wang2012no,shokry2018deeploc,rizk2018cellindeep}. The most ubiquitous indoor localization techniques are either WiFi-based or dead-reckoning based. WiFi-based techniques \cite{bahl2000radar,lamarca2005place}, require calibration to create a prior ``RF map'' for the building to counter the wireless channel noise. This calibration process is time consuming, tedious, and requires periodic updates. This led to the emergence of new calibration-free techniques \cite{elbakly2016robust,park2010growing}. These techniques, however, suffer from low location accuracy or requiring explicit user intervention. 
In dead-reckoning based system \cite{jin2011robust, wang2012no}, the inertial sensors on mobile phones are leveraged to dead-reckon the user starting from a reference point. Nevertheless, dead-reckoning error quickly accumulates with time; leading to a complete deviation from the actual path. Therefore, many techniques have been proposed to reset the dead-reckoning error including snapping to environment anchor points, such as elevators and stairs \cite{wang2012no, abdelnasser2016semanticslam} and matching with the map information \cite{rai2012zee}. \sys{} combines dead-reckoning as its motion model with building anchors and  its novel human-based anchors, as observation models,  that are ubiquitously in any building gives it an ample opportunities to reset the dead-reckoning error. 
\subsection{Encounter-based Localization}
Recent work has explored more social aspects of indoor localization named encounters-based approaches \cite{jun2013social,martin2014social}. 
The basic idea behind these approaches is to leverage the useful proximity information between users to enhance the localization accuracy.
SocialLoc \cite{jun2013social} utilises users' meeting or missing information to correct localization errors. A user may
intersects with another user, which is defined as an encounter. If the user does not meet another specific user during the localization process, such an event is defined as non-encounter. SocialLoc utilizes these encounter and non-encounter events to cooperatively correct the localization errors.
SocialSpring \cite{martin2014social} is a general system that sits on top of an indoor positioning system. It leverages the pairwise distance measurements between the different users in the environment to refine position estimates. 

\sys{}, on the other hand, combines building anchors with human-based anchors (users' encounter) in a collaborative SLAM framework to enhance the localization accuracy.
\subsection{Simultaneous Localization and Mapping}
 SLAM \cite{durrant2006simultaneous} is a well-known technique in the robotics domain where robot motion and observation models (e.g., odometers, laser, ultrasound) are cost-prohibitive for smartphones.
Recently, a number of SLAM-based indoor localization systems have been proposed for tracking human users. For instance, \textit{WiFi-SLAM} \cite{ferris2007wifi} builds a WiFi map based on a Gaussian Process Latent Variable Model; which allows for detecting the latent-space locations of unlabeled
signal strength data. \textit{ActionSLAM} \cite{hardegger2012actionslam} leverages a dedicated foot-mounted inertial sensor to track the user's motion while using actions uniquely related to certain locations as anchors for the user's location correction. Recently, \textit{SemanticSLAM} \cite{abdelnasser2016semanticslam}, generalizes the SLAM concept to work with smartphones without using any external hardware. It harnesses smartphone inertial sensors to dead-reckon the user's location and leverages anchor points (elevator, escalator, stairs), which can be uniquely detected by phone sensors for the accumulated error resetting. These systems, however, either require special hardware or are dependent on a limited set of anchors that are not ubiquitously available in all buildings.
\textit{\sys{}, on the other hand}, combines building anchors with the newly introduced human-based anchors that are ubiquitously in any environment.
\subsection{Multi-Robot SLAM}
Multi-Robot SLAM is an extension for SLAM that can benefit from having multiple robots in many places to carry out diverse tasks at the same time
\cite{howard2006multi,williams2002towards
,fenwick2002cooperative,birk2006merging,thrun2005multi,
ko2003practical,zhou2006multi,howard2004multi}. 
The typical ultimate target of Multi-Robot SLAM systems is to construct the joint (common) map of an unknown environment, i.e. a virtual map of the anchors the robots see. For this, they use the \textit{traditional/static building anchors} to reset the robots' accumulated error. For instance the work in \cite{howard2006multi}, uses robots' specific sensors (laser range-finders) to detect robot pairs encounter. Then, by determining their relative position using laser sensors, the system combines the observations from both robots into the common map. Hence, it \textit{indirectly} refines robots' locations by obtaining a more accurate common map. 

\sys{}, on the other hand, does enhance the environment map (traditional anchors) using a similar approach to multi-robot SLAM. In addition, we also leverage the other users (other robots in multi-robot SLAM) as a new type of anchors. Specifically, we leverage other users (robots) as a temporary anchor to \textit{directly} refine users' positions and reset accumulated error. This leads to increased anchor density and hence better localization accuracy as quantified in the evaluation section.

\section{Conclusions}
\label{sec:conc}
We presented \sys{}: an accurate, low-overhead, and collaborative simultaneous localization and mapping system. \sys{} tracks the user's location using dead-reckoning and leverages physical anchors existing in buildings as well a novel human-based dynamic anchors to reset the accumulated localization error in a unified probabilistic framework. A number of methods to detect users' encounters and to handle their relative distance measurement error have also been proposed. Furthermore, we presented a theoretical proof of system convergence as well as the human anchors ability to reset the accumulated error.

Evaluation of \sys{} in a typical university building using different
Android phones shows that it can achieve a median distance error
of 1.1m, which outperforms other state-of-the-art indoor localization techniques by approximately 55\%, highlighting its promise for ubiquitous indoor localization.

\ifCLASSOPTIONcompsoc
\else
\fi


\ifCLASSOPTIONcaptionsoff
  \newpage
\fi

\bibliographystyle{abbrv}
{\footnotesize
\bibliography{refs2}}

\begin{IEEEbiography}
[{\includegraphics[width=1in,height=2in,clip,keepaspectratio]{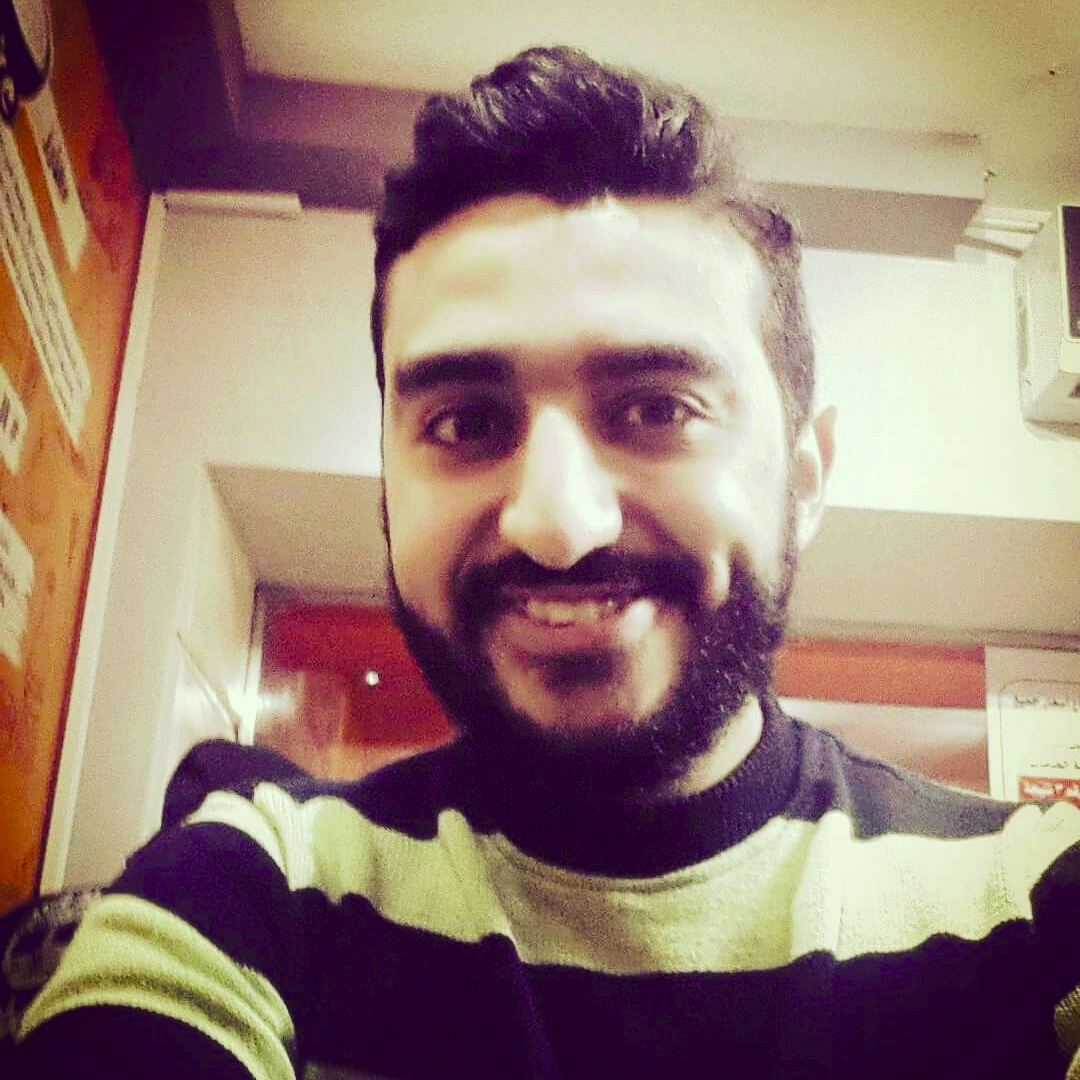}}]{Ahmed Shokry}
is with the Wireless Research Center at Alexandria University, Egypt. He received his B.Sc. in Computer and Systems
Engineering from Alexandria University in 2013. He also received his M.Sc. in Computer Science and Engineering from the same university in 2019. 
His research interests include location determination systems, theoretical and experimental algorithms.
\end{IEEEbiography}

\begin{IEEEbiography}
[{\includegraphics[width=1in,height=2in,clip,keepaspectratio]{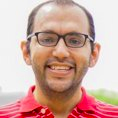}}]{Moustafa Elhamshary}
is a late assistant professor
at Tanta University, Egypt. Previously, he was a
research fellow at Graduate School of Information Science and Technology, Osaka University,
Japan. His research interests include location
determination technologies, mobile sensing, human activity recognition and location-based social networks
\end{IEEEbiography}

\begin{IEEEbiography}    [{\includegraphics[width=1in,height=2in,clip,keepaspectratio]{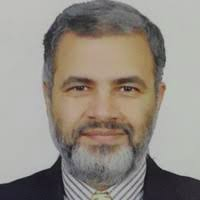}}]{Moustafa Youssef}
is a professor at Alexandria
University and founder \& director of the Wireless
Research Center of Excellence, Egypt. His research
interests include mobile wireless networks, mobile
computing, location determination technologies, pervasive computing, and network security. He has tens
of issued and pending patents. He is the Lead Guest
Editor of the upcoming IEEE Computer Special
Issue on Transformative Technologies, an Associate
Editor for IEEE TMC and ACM TSAS, an Area
Editor of Elsevier PMC, as well as served on the
organizing and technical committees of numerous prestigious conferences.
He is the recipient of the 2003 University of Maryland Invention of the
Year award, the 2010 TWAS-AAS-Microsoft Award for Young Scientists,
the 2013 and 2014 COMESA Innovation Award, the 2013 ACM SIGSpatial
GIS Conference Best Paper Award, the 2017 Egyptian State Award, multiple
Google Research Awards, among many others. He is an ACM Distinguished
Speaker, an ACM Distinguished Scientist, and an IEEE Fellow.
\end{IEEEbiography}



\end{document}